\theoremstyle{plain}
\newtheorem{lemma}{Lemma}
\newtheorem{theorem}{Theorem}
\newtheorem{assumption}{Assumption}
\newtheorem{Observation}{Observation}
\theoremstyle{definition}
\newtheorem{example}{Example}
\def\bb0{{\mathbb{0}}}
\def\bb{{\mathbf{b}}}
\def\b0{{\mathbf{0}}}
\def\opt{\mathsf{OPT}}
\def\off{\mathsf{OFF}}
\def\b1{{\mathbf{1}}}
\def\bbE{{\mathbb{E}}}
\def\bbP{{\mathbb{P}}}
\def\cA{\mathcal{A}}
\def\cI{\mathcal{I}}
\def\sfb{{\mathsf{b}}}
\def\sf0{{\mathsf{0}}}
\def\nn{\nonumber}
\begin{document}

% Set the width to be used for figures
\newlength{\figurewidth}\setlength{\figurewidth}{0.6\columnwidth}

%----------------------------------------------------------------------
% Title Information, Abstract and Keywords
%----------------------------------------------------------------------

% "Cheat" by increasing the text area vertically by one line
% (this has to be before \maketitle to have an effect on the first page)
\addtolength{\topmargin}{-0.5\baselineskip}
\addtolength{\textheight}{\baselineskip}

\title{\fontsize{23}{23}\selectfont Online Knapsack Problem under Expected Capacity Constraint}

\newcounter{one}
\setcounter{one}{1}
\newcounter{two}
\setcounter{two}{2}

\addtolength{\floatsep}{-\baselineskip}
\addtolength{\dblfloatsep}{-\baselineskip}
\addtolength{\textfloatsep}{-\baselineskip}
\addtolength{\dbltextfloatsep}{-\baselineskip}
\addtolength{\abovedisplayskip}{-1ex}
\addtolength{\belowdisplayskip}{-1ex}
\addtolength{\abovedisplayshortskip}{-1ex}
\addtolength{\belowdisplayshortskip}{-1ex}

%\numberofauthors{2} %  in this sample file, there are a *total*
% of EIGHT authors. SIX appear on the 'first-page' (for formatting
% reasons) and the remaining two appear in the \additionalauthors section.
%
\author{\parbox{3 in}{ \centering Rahul Vaze\\
%        \thanks{**The footnote marks may be inserted manually}\\
        School of Technology and Computer Science \\
        Tata Institute of Fundamental Research\\
      	Mumbai, India\\
        {\tt\small vaze@tcs.tifr.res.in}} 
}

%\author{
%% You can go ahead and credit any number of authors here,
%% e.g. one 'row of three' or two rows (consisting of one row of three
%% and a second row of one, two or three).
%%
%% The command \alignauthor (no curly braces needed) should
%% precede each author name, affiliation/snail-mail address and
%% e-mail address. Additionally, tag each line of
%% affiliation/address with \affaddr, and tag the
%% e-mail address with \email.
%%
%% 1st. author
%\alignauthor
%Rahul Vaze \\
%%	\vspace{0.3cm}
%       \affaddr{School of Technology and Computer Science,  Tata Institute of Fundamental Research, Mumbai, India}
%%       \affaddr{40 av. du General Leclerc}\\
%%       \affaddr{Issy-les-Moulineaux, France}\\
%       %\email{rahul.vaze@gmail.com}
%% 2nd. author
%\and
%\alignauthor 
%Marceau Coupechoux
% \\
%%\vspace{0.3cm}
% \affaddr{LTCI, CNRS, Telecom ParisTech, University Paris-Saclay, France}
%       % \affaddr{Telecom ParisTech and CNRS LTCI}\\
%       %\affaddr{23 av. d?Italie}\\
%     %  \affaddr{ Paris, France}\\
%      %\email{marceau.coupechoux@telecom-paristech.fr}
%}
\maketitle
\begin{abstract}
Online knapsack problem is considered, where items arrive in a sequential fashion that have two attributes; value and weight. Each arriving item has to be accepted or rejected on its arrival irrevocably. 
The objective is to maximize the sum of the value of the accepted items such that the sum of their weights is below a  budget/capacity.  Conventionally a hard budget/capacity constraint is considered, for which variety of results are available. In modern applications, e.g., in wireless networks, data centres, cloud computing, etc., enforcing the capacity constraint in expectation is sufficient. With this motivation, we consider the knapsack problem with an expected capacity constraint. For the special case of knapsack problem, called the secretary problem, where the weight of each item is unity, we propose an algorithm whose probability of selecting any one of the optimal items is equal to $1-1/e$ and provide a matching lower bound. For the general knapsack problem, we propose an algorithm whose competitive ratio is shown to be $1/4e$ that is significantly better than the best known competitive ratio of $1/10e$ for the knapsack problem with the hard capacity constraint.

\end{abstract}
\section{Introduction} \label{sec:intro}
Knapsack problem \cite{vazirani2001approximation} is a versatile combinatorial object that models a large variety of resource allocation paradigms. The knapsack has a given capacity, and the objective is to choose a set of items with the largest sum of their values such that the sum of their weights/sizes is less than the knapsack capacity. 
There are several important examples of real-life applications of knapsack problem, such as 
allocation of an advertising budget to the promotions of individual
products, allocation of preparation of final exams in different
subjects given limited time, job scheduling in clouds with overall machine time constraint, sensor networks with energy constraints etc. 
Applications of the knapsack problem also include questions in auction design, such as to choose agents with private values and publicly known weights that fit into a knapsack \cite{aggarwal2006knapsack}. 

The knapack problem is is known to be NP-hard even in the offline setting, where an algorithm can select items by considering all items together. It is, however, possible in the offline setting to approximate the optimal solution within a factor of  $1+\epsilon$ for any $\epsilon > 0$ in polynomial time \cite{vazirani2001approximation}. 

The online version of the knapsack problem models the question of resource allocation under the future uncertainties, where items arrive in a sequential fashion and any algorithm has to accept or reject items irrevocably without having access to future arrivals. The online scenario is relevant for applications, such as in cloud servers, where jobs have to accepted/rejected without the knowledge of profitability of future jobs, or to hire a particular candidate not knowing whether a stronger candidate might apply at a later stage, generalized adwords \cite{buchbinder2009design}, load balancing \cite{goel1999stochastic}, cognitive radio, admission control \cite{shi2014online, zhang2014dynamic, zheng2013coordinated, cello10generalized, zhang2009resource}. etc. 
  The performance of any online algorithm is typically quantified using the metric of {\it competitive ratio}, that measures the ratio of the profit of the online algorithm and the optimal offline algorithm (that has access to non-causal information). The online version of the knapsack problem has also received considerable attention in the literature \cite{marchetti1995stochastic, han2015randomized, babaioff2007knapsack, iwama2010online}, with the best known competitive ratio of $1/10e$ in \cite{babaioff2007knapsack}.

In this paper, we consider an important variation of the online knapsack problem, where we enforce the capacity constraint in expectation, that is of both practical and theoretical interest.
Classically, for the knapack problem (both in offline and online cases), a hard capacity constraint is enforced for selecting the items, i.e., the sum of the weight of all the selected items is below a fixed capacity. In modern applications, there are many scenarios, such as cloud computing, where it is sufficient to enforce the capacity constraint in expectation, i.e., for specific instances of input the algorithm might decide to use a larger capacity, but in expectation it satisfies the given constraint. Our expected capacity constraint generalizes the overdraft approach of \cite{BoSrikant} for similar capacitated problems, where performance within $O(\epsilon)$ of the optimal revenue can be achieved when resources of the order of $O(1/\epsilon)$ over the specified budget/capacity constraint are allowed to be used.

One specific application (among other online knapsack applications mentioned above) that motivates the study of the knapsack problem under the expected capacity constraint is job scheduling in clouds, where a large number of jobs are submitted  with heterogenous resource requirements, and it is reasonable to expect the cloud to execute these jobs by using larger memory and resources for some instances of input while maintaining the resource constraint on average so as to maximize its utility function. Similar case can be made for other applications of the knapsack problem such as generalized adwords, load balancing, and sensor network, where it is easy to envisage an expected resource capacity constraint.

%In the {\it online} knapsack problem, each item has two attributes : value and  weight, and items appear sequentially, that have to be accepted/rejected irrevocably using only causal information, to maximize the total value of the selected items subject to the sum of their weights being less than the specified capacity. 

An important special case of the online knapsack problem is the secretary problem \cite{ferguson1989solved}, where secretaries are interviewed sequentially, and as soon as one secretary is hired, the process terminates, and no more secretaries are interviewed.
The secretary problem is equivalent to an online knapsack problem, where the weight of each item is $1$ and the hard capacity constraint is also $1$. Thus, in the secretary problem, the objective is to select only one item with the largest value in an online fashion. The secretary problem can be cast as an Markov decision process and has attracted attention from different research communities because of its universality. One limitation of the secretary problem is however that if the input, order of the arrival of items, is controlled by an adversary, then the performance of an optimal algorithm is arbitrarily bad. To keep the problem non-degenerate, a universal assumption is made about the input arrival sequence to be selected via an uniformly random permutation over the set of items, which is also called the secretarial model of input. 

Under the secretary model of input, the classical secretary problem has been solved via multiple approaches as reviewed in \cite{freeman1983secretary}, and the optimal probability (competitive ratio) of selecting the best item is known to be $1/e$. 
Over the years, multiple variants of secretary problems have been studied, that have been well documented in survey \cite{ freeman1983secretary}. Some important variations of the secretary problem include multiple choice \cite{preater1994multiple, kleinberg2005multiple}, infinitely many items \cite{gianini1976infinite}, unknown number of items \cite{freeman1983secretary,ferguson1989solved}, maximizing the expected value \cite{freeman1983secretary,ferguson1989solved}, matroid constraint \cite{babaioff2007matroids}, etc. 
A simple extension of the secretary problem, called the $k$-secretary, is when the objective is to select $k$ items with the largest sum of their values, when all the weights are unity. Similar to the $k=1$-secretary problem, the optimal algorithm has competitive ratio $1/e$ \cite{babaioff2007knapsack} for $k>1$ as well.

To the best of our knowledge, however, the expected capacity constraint, which in the context of the $k$-secretary problem implies that an algorithm can select at most $k$ items in expectation, where the randomness is over the uniformly random input sequence, has not been studied and this new direction is rather novel.  
The fundamental difference between the hard capacity constraint and the expected capacity constraint in the context of secretary or online knapsack problem is that with the expected capacity constraint, an algorithm can scan all the items, and need not terminate as soon as the sum of the weight of selected items is as much as the fixed capacity. For example, given that the values of items in order or their arrival be $\{1,5, 3, 9\}$. Then with a hard capacity constraint of $1$, if the algorithm decides to choose the second item with value $5$, then the algorithm terminates. With an expected capacity constraint of $1$, however, even if an algorithm selects item $2$, it can still consider the two items arriving thereafter and select item $4$ with value $9$. Note that an algorithm is not allowed to remove already selected items, e.g. in this case item $2$. Thus, with expected capacity constraint, the algorithm will have to appropriately modulate the probability of selecting item $2$ and subsequently item $4$. Thus, the expected capacity constraint allows the algorithm a significantly larger flexibility.

The general online knapsack problem has been studied widely \cite{marchetti1995stochastic, han2015randomized, babaioff2007knapsack} under the hard capacity constraint, with the best known competitive ratio of $1/10 e$ reported in \cite{babaioff2007knapsack} for a randomized algorithm under the secretarial input. Under a large market assumption, that requires that the value of any item is `small' compared to the value of the optimal solution, an online algorithm for the knapsack problem with competitive ratio of $1/2e$ is proposed in \cite{VazeInfocom2017}. The stochastic version of the knapsack problem has been studied in \cite{marchetti1995stochastic}, while restricting the ratio of the value and the weight of any items to lie within $[L,H]$, $1/\log\left(\frac{H}{L}\right)$-competitive algorithms have been proposed in \cite{buchbinder2005online, buchbinder2006improved, zhou2008budget}.

Designing online algorithms for knapsack problem in comparison to the secretary problem are significantly more challenging as evident in \cite{marchetti1995stochastic, han2015randomized, babaioff2007knapsack}, primarily because there is no `simple' offline algorithm that can approximate the optimal solution. Indeed, it is possible to approximate the optimal solution of the knapsack problem in the offline setting within a factor of  $1+\epsilon$ for any $\epsilon > 0$ in polynomial time, however, that algorithm is not amenable to be made {\it online}.

As discussed before, for the case of hard capacity constraint, it is easy to see that if the input (values and weights of items) is chosen adversarially, no deterministic online algorithm can have bounded competitive ratio, and no randomized algorithm can have competitive ratio better than $1/n$ for the secretary problem, where $n$ is the total number of items. We show in this paper that even under the expected capacity constraint no randomized algorithm can have competitive ratio better than $1/n$ for the secretary problem under the adversarial input.  
Thus, following the long line of work on secretary and online knapsack problem \cite{babaioff2007knapsack, KorulaPal}, we consider a secretarial input model even when considering the expected capacity constraint, where the order of arrival of items is uniformly random, but their values and weights are allowed to be arbitrary. 

Our contributions are as follows:
\begin{itemize} 
\item For the secretary problem, under the expected capacity constraint of $1$, we propose an algorithm whose competitive ratio is $1-1/e$. To complement the result, we also show that no online algorithm can achieve competitive ratio better than $1-1/e$ under the expected capacity constraint. Compared to the hard capacity constraint of $1$, where the optimal competitive ratio is $1/e$, there is a two-fold improvement in the competitive ratio with the expected capacity constraint.

\item For the $k$-secretary problem, where the objective is to select the $k$ best items and all items  have weight $1$, under the expected capacity constraint of $k$, a simple modification of the algorithm proposed for the $k=1$-secretary problem is shown to achieve a competitive ratio is $1-1/e$, which is also the best possible. 
%To complement the result, we also show that no online algorithm can achieve competitive ratio of better than $1-1/e$ under the expected capacity constraint. Compared to the hard capacity constraint of $1$, where the optimal competitive ratio is $1/e$, there is a two-fold improvement in the competitive ratio with the expected capacity constraint.
%Assuming a large market assumption and secretarial input, we propose a simple algorithm for the online knapsack problem, that is shown to be $2 e$ competitive. Compared to prior work \cite{babaioff2007knapsack}, enforcing the large market assumption, which is mostly satisfied in practice especially in networking problems, we are able to significantly improve the competitive ratio from $10e$ to $2e$. Moreover, our algorithm is also deterministic.

\item We propose a $1/4e$ competitive algorithm for the online knapsack problem under the expected capacity constraint, which significantly improves the performance of best known algorithm that has competitive ratio of $1/10e$ under the hard capacity constraint \cite{babaioff2007knapsack}. The main idea of the proposed algorithm is to first consider a `simple' offline algorithm that is allowed to use extra capacity $C$, where $1<C\le 2$, that can be shown to provide a $\frac{1}{C-1}$- approximation to the optimal solution of the offline knapsack problem with hard capacity constraint $1$. The `simple' offline algorithm also provides a threshold for selection of items in terms of the ratio of their weight and the value.

Using this threshold, we then make the `simple' offline algorithm, online, using the ideas of sample and price class of algorithms, where the algorithm only observes (but does not select any) an initial set of items and builds a threshold, which is then used to select the forthcoming items. This online algorithm is shown to be $1/2e$ competitive with respect to the simple offline algorithm, which itself is $\frac{1}{C-1}$ approximate with respect to the optimal offline algorithm for the knapsack problem with hard capacity of $1$. The online algorithm that uses capacity $C$ is then used with probability $1/C$ to ensure the expected capacity constraint of $1$ and results in overall competitive ratio of $\frac{1}{2e C(C-1)}$, which is $=\frac{1}{4e}$ for $C=2$.
This algorithm's performance comes close to $1/2e$-competitive algorithm of \cite{VazeInfocom2017}, which however is valid only under the large market assumption. 

\end{itemize}

\section{Online Knapsack Problem} Let the value and weight  of item $i \in \cI, |\cI| = n$, be $v(i)$ and $w(i)$, respectively, and the corresponding weight to value ratio (called the buck-per-bang in the paper) be $b(i) = \frac{w(i)}{v(i)}$. The usual knapsack problem is to select the subset of items of $\cI$ that maximizes the sum of their values, subject to a hard constraint $C$ on the sum of the weight of the items in the selected set.
Without loss of generality, let $C=1$ by rescaling weights and $w(i) \le 1, \ \forall \ i$.

In this paper, we consider the knapsack problem with a slightly weaker constraint on capacity. Specifically, we assume that the capacity constraint is in expectation, i.e., an algorithm is allowed to violate the hard capacity constraint of $1$ on specific instances of input or its own randomization, but in expectation should meet the capacity constraint of $1$. This generalization is motivated by several practical cases of importance such as job scheduling in clouds, where typically the resource guarantees are easier to adhere to in expectation.

We consider the online version of the knapsack problem, where on each item's  arrival, it has to be accepted/rejected irrevocably. In the online setting, the performance metric is called the competitive ratio, that measures the ratio of the profit made by an online algorithm and the optimal offline algorithm that is allowed to know the future sequence (value and weight) of items, minimized over all possible input sequences $\sigma$, that specifies the order of arrival of items in $\cI$. Thus, for an algorithm $A$, its competitive ratio is 
$$ \mu_A = \min_{\sigma}\frac{\sum_{s\in S_A}v_{(\sigma)}(s)}{v_{(\sigma)}(\mathsf{OPT})},$$ where $\mathsf{OPT}$ is the optimal offline set of selected items and $S_A$ is the set of items selected by $A$. Hence the objective is to design an online algorithm with maximum competitive ratio.

For the case of hard capacity constraint, it is easy to see that if the input (values and weights of items) are chosen adversarially, no deterministic online algorithm can have bounded competitive ratio, and no randomized algorithm can have competitive ratio better than $1/n$, where $n$ is the total number of items. With capacity constraint in expectation, it still turns out that no randomized algorithm can have competitive ratio better than $1/n$. 
\begin{theorem} Under the expected capacity constraint of $1$, the competitive ratio of any online algorithm with the adversarial input is at most $1/n$.
\end{theorem}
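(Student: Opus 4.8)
The plan is to prove the bound by exhibiting an explicit adversarial family of instances and then exploiting the fact that, since the input is deterministic, the expected-capacity constraint must be satisfied on \emph{each} instance separately, with the expectation taken only over the algorithm's internal randomness. Concretely, fix a large parameter $R>1$ and, for $k=1,\dots,n$, let $\sigma_k$ be the instance in which every item has unit weight, item $i$ has value $R^i$ for $i\le k$, and item $i$ has value $0$ for $i>k$. This is a legal knapsack instance (indeed a secretary instance, $w(i)\le 1$), and since all weights are $1$ and the capacity is $1$, the optimal offline value on $\sigma_k$ equals $R^k$. The structural point that makes the family hard is that $\sigma_k,\sigma_{k+1},\dots,\sigma_n$ all present the \emph{identical} length-$i$ prefix $(R,R^2,\dots,R^i)$ for every $i\le k$, so no online algorithm can tell them apart before step $k$.

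Next I would fix an arbitrary (possibly randomized) online algorithm $A$ and, for each position $i$, let $a_i$ denote the probability that $A$ selects the item arriving at position $i$ given the observed prefix $(R,\dots,R^i)$. By the indistinguishability above and linearity of expectation, this single number governs $A$'s behaviour on item $i$ across all $\sigma_k$ with $k\ge i$. On $\sigma_k$ the expected total weight selected by $A$ is at least $\sum_{i=1}^{k}a_i$ (the trailing worthless items can only add weight), so feasibility under the expected-capacity constraint forces $\sum_{i=1}^{k}a_i\le 1$ for \emph{every} $k$; taking $k=n$ gives $\sum_{i=1}^{n}a_i\le 1$, hence $\min_{1\le k\le n}a_k\le 1/n$. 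On the other hand, by linearity again, $\mathbb{E}_A[\text{value picked on }\sigma_k]=\sum_{i=1}^{k}a_iR^i$, so the ratio on $\sigma_k$ is $\sum_{i=1}^{k}a_iR^{i-k}\le a_k+\sum_{j\ge 1}R^{-j}=a_k+\tfrac{1}{R-1}$. Choosing $k$ to be the minimizer of $a_k$ yields $\mu_A\le \tfrac{1}{n}+\tfrac{1}{R-1}$, and since this holds for every $R>1$ while $\mu_A$ is independent of $R$, we conclude $\mu_A\le 1/n$.

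The step I expect to be the main obstacle is not any computation but the modeling of the expected-capacity constraint for adversarial input: it must be read as ``$\mathbb{E}[\text{weight}]\le 1$ on each fixed instance,'' which produces the whole \emph{family} of constraints $\sum_{i\le k}a_i\le 1$ for $k=1,\dots,n$, rather than a single averaged constraint; it is precisely this family that drives the achievable ratio down to the hard-capacity value $1/n$ (an averaged constraint over the $\sigma_k$'s would only yield a bound of order $1/\sqrt{n}$, which is why the zero-suffix padding and the per-instance reading are both essential). Two minor points I would make sure to verify cleanly: that $a_i$ is genuinely common to all instances sharing the length-$i$ prefix (pure causality), and that appending worthless suffix items neither helps the algorithm's value nor lets it ``park'' capacity for free (nonnegativity of weights and values). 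Everything else is the one-line geometric-series estimate and the limit $R\to\infty$ above.
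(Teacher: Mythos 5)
Your proof is correct, but it takes a genuinely different route from the paper's. The paper argues abstractly about where the adversary places the single best item: it claims that an algorithm selecting $\ell$ locations should do so uniformly over all $\binom{n}{\ell}$ subsets (else the adversary hides the best item in an under-covered subset), and then solves the linear program $\max \sum_\ell p_\ell \,\ell/n$ subject to $\sum_\ell \ell p_\ell \le 1$, directly bounding the probability of catching the best item by $1/n$. You instead exhibit an explicit hard family: geometric values $R,R^2,\dots,R^k$ padded with zeros, using prefix-indistinguishability to define per-position acceptance probabilities $a_i$, extracting the single constraint $\sum_i a_i\le 1$ from the longest instance, and finishing with a geometric-series estimate and $R\to\infty$. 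Your argument buys two things: it is more rigorous at the point where the paper is weakest (the assertion that uniform subset selection is optimal is stated rather than proved, and it implicitly treats the algorithm's behaviour as value-oblivious), and it bounds the value-based competitive ratio rather than only the probability of selecting the best item — indeed, for the probability metric your construction gives the bound immediately as $a_k\le 1/n$ without the limit in $R$. What the paper's approach buys is brevity and an LP formulation that structurally parallels the lower-bound LP it later uses in the secretarial-input setting. Your reading of the expected-capacity constraint as holding per instance, over the algorithm's internal randomness only, matches the paper's intent and is indeed the crux; the remaining steps (commonality of $a_i$ across instances sharing a prefix, harmlessness of the zero-value suffix) are verified correctly.
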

\begin{proof} See Appendix \ref{app:adversarial}.\end{proof}

Following prior work, thus, to keep the problem non-degenerate in terms of competitive ratio, we assume that the order of arrival of items is uniformly random (secretary-model), i.e., each permutation over $n$ arriving items in $\cI$ is equally likely. 
Let $\pi$ be a uniformly random permutation over $[1:n]$. Then the the $k^{th}$ item that 
arrives has value $v(\pi^{-1}(k))$, weight $w(\pi^{-1}(k))$, and buck-per-bang $b(\pi^{-1}(k))$. 

For a set $S$, we let $v(S)  = \sum_{s \in S} v(s)$. Under the secretary-model of input, the competitive ratio of an online algorithm $A$ for solving the knapsack problem is defined as 
$$ \mu_A = \min_{\cI}\frac{\bbE_{\pi}\left\{\sum_{s\in S_A}v(s)\right\}}{v(\mathsf{OPT})},$$ where $\bbE$ is the expectation operator, $\cI$ is the complete set of items,  $\mathsf{OPT}$ is the optimal offline set of selected items and $S_A$ is the set of items selected by $A$. The online knapsack problem is to find the best algorithm $A$ that maximizes the competitive ratio $\mu_A$. $A$ is said to be $\alpha > 1$ competitive if $\mu_A =1/ \alpha$.
% $S \subseteq \cI$ (under $\pi$)
We first consider the two popular special cases of the knapsack problem, called the {\it secretary} and the $k$-secretary problems, where the weight of each item is $1$, before studying the general knapsack problem in Section \ref{sec:kp}.

\subsection{Secretary Problem}\label{sec:1sec}
In secretary problem, under the secretary-input model, the problem is to maximize the probability of selecting the best secretary (item with the largest value in our setting). Letting each item's weight to be $1$,  the classical secretary problem is a special case of the knapsack problem with hard capacity constraint of $1$, since  at most one item can be selected, and once the item is selected, the algorithm terminates. 

With the expected capacity constraint of $1$, as considered in this paper, the fundamental difference compared to the hard capacity constraint is that any online 
algorithm can actually access the whole input sequence sequentially and does not have to terminate as soon as one item is selected. However, an item is selected only using causal information, and once an item is selected, it cannot be removed subsequently.

Let $i^\star$ be the best item in $\cI$, then, under the secretarial input, the competitive ratio for algorithm $\cA$ for the secretary problem is defined as 
$$ \mu_A = \min_{\cI}\bbP_{\pi}(\cA \ \text{selecting item} \ i^\star),$$
with expected number of selected items being at most $1$. We next propose a simple modification to the classical solution to the secretary problem under the hard capacity constraint, and show that the competitive ratio can be improved significantly under the expected capacity constraint compared to the hard capacity constraint.

\subsubsection{Upper Bound on Competitive Ratio}
Consider a class of algorithms which we call $t$-Threshold Algorithm, that rejects the first $t$ items, and selects any item thereafter, if it is better than the best seen so far. Recall that for hard capacity constraint of $1$, optimal $t=n/e$ and the optimal algorithm terminates as soon as the first item that is better than the best items seen until $t=n/e$ is encountered. With the expected capacity constraint, the $t$-Threshold Algorithm does not terminate without considering the whole input sequence but once an item is selected, it cannot be rejected, and it has to choose $t$ judiciously to ensure the expected capacity constraint.

\begin{algorithm}
\caption{$t$-Threshold Algorithm}\label{alg:t}
\begin{algorithmic}[1]
\State \%{\bf Offline Phase} 
\State Do not select the first $t$ items 
$\cI_t(\pi) \subset \cI$ under permutation $\pi$  
\State $R$ = best item of $\cI_t(\pi)$
%\State $\text{price}(r) = b(e)$ for $e = (\ell, r) \in \sfM_{t}$
%\State Order the item of $V$ in increasing $\text{price}(r), r\in V$, the item with the largest price is $r_{|V|}$
\State \%{\bf Decision/Online Phase} 
\State Initialize $S = \Phi$ \%The set to be selected
\State For every new item $i> t$ in the decision phase
\If{$v(i) > v(R)$}

\State Select item $i$
		\State  $S = S\cup\{i\}$
		\State  $R = \{i\}$
		
	\Else \State Do not select item $i$
\EndIf
\end{algorithmic}
\end{algorithm}
\begin{theorem}\label{thm:1sec} $t$-Threshold Algorithm with $t=n/e$\footnote{Throughout, for ease of exposition, we assume that $n/e$ is an integer, otherwise, a floor operator will be needed.}  has a competitive ratio of $1-1/e$ and it satisfies the expected capacity constraint of $1$. 
\end{theorem}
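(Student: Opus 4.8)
The plan is to verify the two assertions in the theorem separately: (a) the algorithm selects the globally best item $i^\star$ with probability exactly $1-1/e$ over the uniformly random arrival permutation $\pi$, so that $\mu_A = 1-1/e$; and (b) the expected number of items the algorithm selects is at most $1$.

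For (a), I would first note that whether $i^\star$ is ever selected depends only on the relative order of the arrivals and not on the actual values (assuming, as is standard for the secretary problem, distinct values), so $\bbP_\pi(\cA\ \text{selects}\ i^\star)$ is the same for every instance $\cI$ and the outer $\min_\cI$ is trivial. The structural observation is that when $i^\star$ arrives in some position $k$, the current record $R$ is the best among the first $k-1$ arrivals, all of which have value strictly below $v(i^\star)$; hence the test $v(i^\star)>v(R)$ in Algorithm~\ref{alg:t} is automatically passed whenever $i^\star$ lands in the decision phase. Therefore $i^\star$ is selected if and only if it is not among the first $t$ arrivals, an event of probability $(n-t)/n$, which equals $1-1/e$ for $t=n/e$. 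Since this holds for every instance, $\mu_A = 1-1/e$.

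For (b), I would write the selected set size as $|S|=\sum_{k=t+1}^{n} X_k$, where $X_k$ is the indicator that the item arriving in position $k$ beats the current record $R$. Because $R$ is updated to each new record as it appears (and initialized to the best of the first $t$ items), $R$ at time $k$ is exactly the running maximum of positions $1,\dots,k-1$; thus $X_k=1$ precisely when the item in position $k$ is the strict maximum of the first $k$ arrivals. Under a uniformly random permutation this has probability $1/k$, so $\bbE|S| = \sum_{k=t+1}^{n} 1/k = H_n-H_t$, the harmonic tail. I would then bound it by an integral, $\sum_{k=t+1}^{n} \tfrac1k \le \int_t^n \tfrac{dx}{x} = \ln(n/t)$, which equals $\ln e = 1$ for $t = n/e$; a floor on $n/e$ only decreases $t$ slightly and keeps the bound $\le 1$ or close to it, so the expected-capacity constraint of $1$ holds.

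The computations are routine, resting on the classical fact that the arrival in position $k$ is the running maximum with probability $1/k$. The one point that needs care is conceptual rather than technical: recognizing that, unlike the classical (hard-capacity) secretary algorithm, this algorithm does not stop at the first record in the decision phase but keeps selecting every subsequent record, so the quantity being constrained is the expected number of such records, governed tightly by the identity $\bbE|S| = H_n - H_t$. Once that is in place, both parts of the theorem follow immediately.
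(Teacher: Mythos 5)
Your proposal is correct and follows essentially the same route as the paper's proof: the best item is selected exactly when it falls outside the first $t$ positions (probability $1-t/n = 1-1/e$), and the expected number of selections is $\sum_{k=t+1}^{n} 1/k \le \int_{t}^{n} \frac{dx}{x} = \ln(n/t) = 1$, using the fact that position $k$ holds the running maximum with probability $1/k$. Your write-up is slightly more explicit about why the record test is automatically passed by the global best in the decision phase and why the minimum over instances is trivial, but the decomposition and the key calculations are identical to the paper's.
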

\begin{proof} For $t$-Threshold Algorithm with $t=n/e$, it is easy to see that the globally best item is not selected only if appears in the offline phase, i.e., it belongs to $\cI_t$, which happens with probability $1/e$. Thus, the probability of selecting the globally best item is $1-1/e$. 

Next, we check that the algorithm satisfies the expected capacity constraint.
Let $\b1_{\ell}$ be the indicator function that the item appearing at the $\ell^{th}$ location is selected by the algorithm. Then the number of selected items by the algorithm is $\sum_{\ell=n/e+1}^{n} \b1(\ell)$. By the definition of the algorithm, an item arriving at location $\ell$ is selected only if it is the best item seen so far, which happens with probability $1/\ell$. Thus,  $\b1_{\ell} = 1$ with probability $1/\ell$.
Using linearity of expectation, we have that the expected number of selected items is 
\begin{align}\nn
\bbE\{\# \text{selected items}\} &\le \sum_{\ell=n/e+1}^{n} \bbE\{\b1_{\ell}\} =  \sum_{\ell=n/e+1}^{n}1/\ell,\\\nn
&\le \int_{n/e}^n \frac{1}{x}dx,\\
&\le 1.
\end{align}
\end{proof}

In addition to having the expected number of selected items to be less than $1$, it is useful to know the distribution of the number of selected items by the $t$-Threshold Algorithm. For that purpose, in Fig. \ref{fig:hist} with $n=10000$ items, we plot the histogram of the number of items selected by the $t$-Threshold Algorithm with $t=\lfloor n/e\rfloor + 1$ to illustrate that not only the expected number of items selected is at most $1$, but there is rapid fall in the number of selected items and it almost never exceeds more than $6$ items.

\begin{figure}
\centering
\includegraphics[width=3.5in]{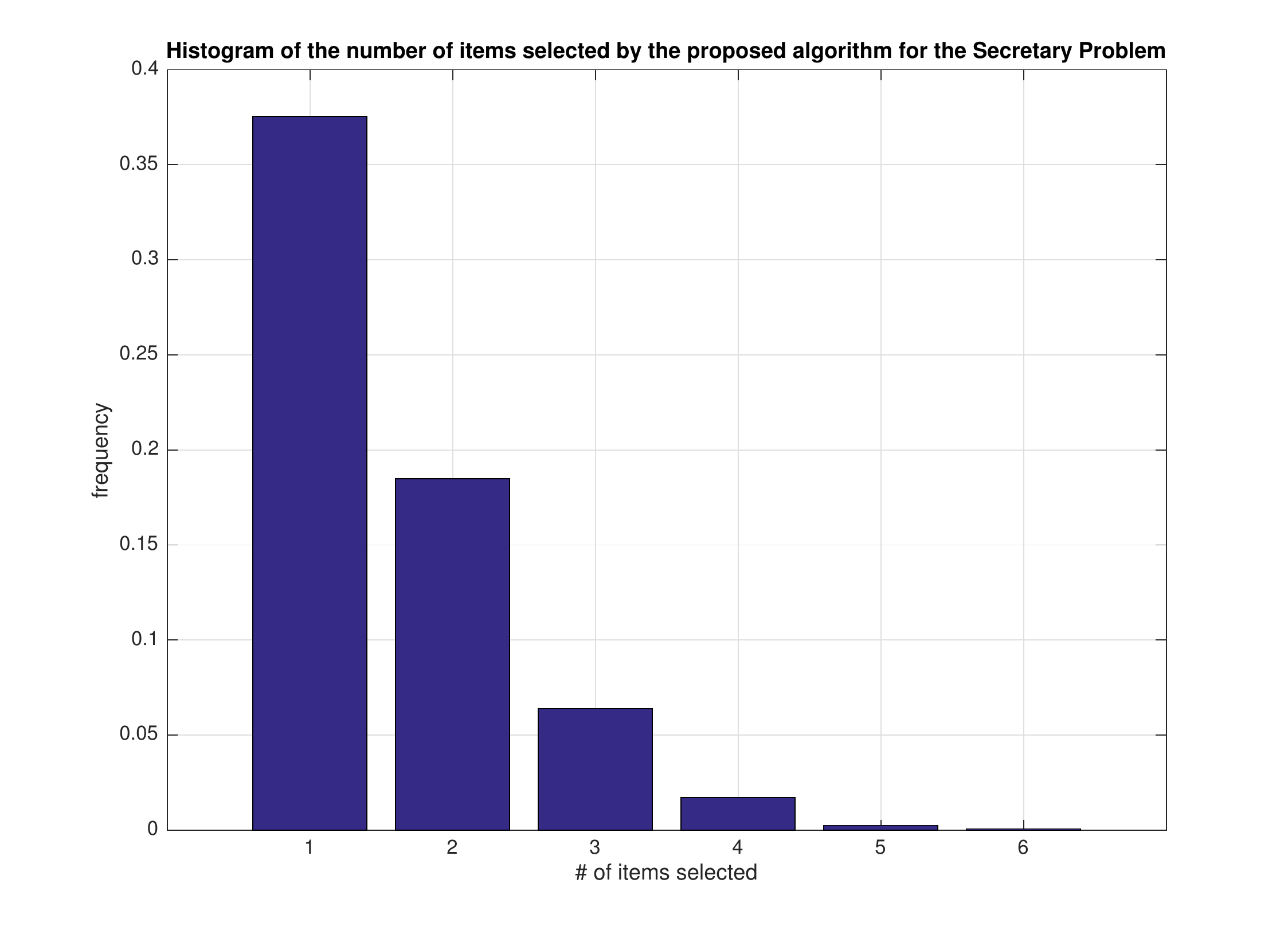}
\caption{Histogram of the number of items selected by the $t$-Threshold Algorithm for the Secretary Problem with number of items $n=10000$.}
\label{fig:hist}
\end{figure}

The $t$-Threshold Algorithm is a simple extension of the optimal algorithm to solve the secretary problem under the hard capacity constraint, where an item in the online phase that is better than the best seen in the offline phase is selected and the algorithm terminates. By choosing the length of the offline phase to be $n/e$, the classical result is that the best competitive ratio for the secretary problem under the hard capacity constraint of $1$ is $1/e$. What we show in Theorem \ref{thm:1sec} is that when the capacity constraint is in expectation, one can expect a two-fold increase in competitive ratio from $1/e$ to $1-1/e$, by selecting as many items that are better than the best seen so far starting from the item that arrives at location $n/e+1$. Thus, the relaxation in the capacity constraint allows a significant improvement in terms of selecting the best candidate.

We next show that no online algorithm can achieve better competitive ratio than $1-1/e$ under the expected capacity constraint of $1$.
\subsubsection{Lower Bound on Competitive Ratio}\label{sec:1seclb}
Now we try and argue that the competitive ratio of any online algorithm cannot be more than $1-1/e$ for solving the secretary problem under expected capacity constraint of $1$. Following observation is immediate, since we are trying to select only the best item and maximizing the probability of its selection. 
\begin{Observation}\label{obs:rec} An optimal algorithm will not select an item arriving at location $i$ if it is not the best seen so far. Moreover, if an optimal algorithm for solving the secretary problem under expected capacity constraint of $1$ selects an item arriving at the $i^{th}$ location, then it always selects any item that arrives after the $i^{th}$ location with the largest value so far.
\end{Observation}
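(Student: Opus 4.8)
The plan is to prove both assertions by interchange (exchange) arguments, using only that the objective is $\bbP_\pi(\cA\text{ selects }i^\star)$ and that there is a single constraint, $\bbE_\pi\{\#\text{selected}\}\le 1$, so that any budget spent on an item that cannot equal $i^\star$ is pure waste.

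For the first assertion, I would observe that an item arriving at location $i$ that is not the best seen so far has value strictly below that of some earlier arrival, hence it is certainly not $i^\star$. Given any online algorithm, the algorithm obtained by making the same decisions but declining every such non-record item has exactly the same probability of selecting $i^\star$ and a weakly smaller expected number of selected items, so it stays feasible. Thus declining non-records is never harmful, and an optimal algorithm may be assumed to select only records.

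For the second assertion, I would condition on a realization in which the algorithm has selected the item at location $i$. By the first assertion this item is a record, so every arrival in locations $1,\dots,i-1$ is strictly smaller than it; consequently $i^\star$ cannot lie in $\{1,\dots,i-1\}$, i.e.\ $i^\star$ arrives at a location $\ge i$. If $i^\star$ actually arrives after location $i$, then at that moment it is the best seen so far, and the only way the algorithm can ever select it is to select that very record. Hence selecting every record after location $i$ makes $i^\star$ certain to be caught (given that the first selection was at a record), whereas skipping any later record leaves a strictly positive probability of missing it. The only price of the ``grab every later record'' rule is extra expected budget; I would pay for it by slightly damping the probability with which the algorithm ever makes its first selection, noting that making a first selection already commits at least one unit of expected budget while contributing at most one unit to the catch probability, and that selecting a later record at location $\ell$ costs expected budget $1/\ell$ while contributing at most $1/n\le 1/\ell$ to the catch probability, so the trade is budget‑feasible and objective‑nondecreasing. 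This shows the optimum is attained by an algorithm with the stated structure.

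The only non-routine step is the budget bookkeeping in the last part: one must check, uniformly over the possibly randomized and history-dependent structure of the algorithm, that the feasibility slack released by reducing the first-selection probability indeed offsets the expected weight of the extra records now grabbed, and that this reshuffling does not decrease $\bbP_\pi(\cA\text{ selects }i^\star)$. Everything else is immediate from the single fact that records are the only arrivals that can equal $i^\star$, which is why the statement is flagged as essentially immediate.
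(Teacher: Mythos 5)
Your proof of the first assertion is fine and is essentially the paper's (unstated) reasoning: a non-record cannot be $i^\star$, so declining it preserves the objective and only relaxes the budget. The paper itself offers no argument beyond calling the Observation immediate, so the real question is whether your exchange argument for the second assertion closes the gap the paper leaves open. It does not. The step that fails is the budget bookkeeping you yourself flag. To show that ``grab every later record, paid for by damping the first-selection probability'' is objective-nondecreasing, you must compare the objective gained \emph{per unit of budget spent} on the newly grabbed records against the objective lost \emph{per unit of budget released} by damping. Your inequalities run the wrong way for this: you upper-bound the gain from a record at location $\ell$ by $1/n$ and note $1/n \le 1/\ell$, i.e.\ gain $\le$ cost, which is exactly the comparison that would make the trade look \emph{bad}, and in any case an upper bound on the gain cannot certify that the swap does not decrease the objective. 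The correct per-unit value of selecting the record at location $\ell$ is $\ell/n$ (conditional on being a record at $\ell$, it is $i^\star$ with probability $\ell/n$), and what must be shown is that this dominates the catch-to-budget ratio of the damped branch; for a branch that starts at location $i$ and grabs nothing else that ratio is $i/n$, and the needed inequality reduces to $\frac{1}{1+\sum_{j>i}1/j}\ge i/n$, i.e.\ essentially $e^{x}\ge 1+x$. None of this appears in your sketch, and without it the claim that the reshuffling is ``objective-nondecreasing'' is an assertion, not a proof.

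A cleaner route avoids the exchange entirely. Once part one is in place, every selection event $A_j$ at location $j$ satisfies $A_j\subseteq\{\text{location }j\text{ is a record}\}$, the record indicators at distinct locations are independent, and $A_j$ is measurable with respect to the first $j$ relative ranks, so $\bbP(A_j\cap\{i^\star\text{ at }j\})=\bbP(A_j)\cdot\frac{j}{n}$. Writing $a_j=\bbP(A_j)$, the catch probability is exactly $\sum_j a_j\frac{j}{n}$ subject to $\sum_j a_j\le 1$ and $a_j\le 1/j$. Since the per-unit values $j/n$ are strictly increasing in $j$, the optimum saturates $a_j=1/j$ for all $j$ beyond a threshold and sets $a_j=0$ before it; saturation $a_j=1/j$ forces the algorithm to select \emph{every} record at such locations, which yields the second assertion as a by-product and simultaneously proves the $1-1/e$ bound of Theorem \ref{thm:1sec:lb} without any delicate history-dependent exchange. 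I would recommend restructuring your argument along these lines rather than trying to repair the damping computation.
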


\begin{theorem}\label{thm:1sec:lb} No online algorithm for solving the secretary problem under the expected capacity constraint of $1$ can have competitive ratio better than $1-1/e$. 
\end{theorem}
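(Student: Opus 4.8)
The plan is to use Observation~\ref{obs:rec} to reduce the problem to a one‑parameter family of ``record‑threshold'' algorithms, to express both the probability of selecting the best item and the expected number of selected items as linear functions of that parameter, and then to solve the resulting linear program, whose optimum will turn out to be exactly $1-1/e$, matching the $t$‑Threshold Algorithm of Theorem~\ref{thm:1sec}. Concretely, I would fix an optimal online algorithm. By Observation~\ref{obs:rec} it never selects a non‑record and, once it selects some record, it selects every subsequent record, so it is fully described by the randomized rule deciding, whenever a record arrives at position $i$ while nothing has yet been selected, whether to ``start'' at $i$. Invoking the standard reduction of secretary‑type problems to ordinal (rank‑based) algorithms---the numerical values of items carry no usable information beyond relative order once the adversary may choose the value set---I would take this rule to depend on $i$ only. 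Let $\mathrm{Rec}_i$ denote the event that position $i$ holds a record (so $\bbP(\mathrm{Rec}_i)=1/i$ and these events are independent), let $C_i$ be the event that the algorithm has started by the end of position $i$, and set $h_i:=\bbP(C_i\mid\mathrm{Rec}_i)\in[0,1]$.

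Next I would establish two identities. Since the item at position $i$ is selected exactly when $C_i\cap\mathrm{Rec}_i$ occurs, linearity of expectation gives that the expected number of selected items equals $\sum_{i=1}^{n}\bbP(C_i\cap\mathrm{Rec}_i)=\sum_{i=1}^{n} h_i/i$, so the expected‑capacity constraint reads $\sum_{i=1}^{n} h_i/i\le1$. For the objective, condition on the position $J$ of the globally best item, which is uniform on $[1:n]$: since $J$ is always a record and there is no record after it, the best item is selected iff $C_J$ holds, and for an ordinal algorithm the law of $C_j$ is the same under $\{J=j\}$ as under $\mathrm{Rec}_j$, because in both conditionings positions $1,\dots,j-1$ carry a uniformly random rank pattern while position $j$ carries the running maximum. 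Hence $\bbP(\text{best selected})=\frac1n\sum_{j=1}^{n}\bbP(C_j\mid\mathrm{Rec}_j)=\frac1n\sum_{i=1}^{n} h_i$.

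It then remains to maximize $\frac1n\sum_{i} h_i$ over $h\in[0,1]^n$ subject to $\sum_{i} h_i/i\le1$. Coordinate $i$ contributes $\frac1n$ to the objective at cost $\frac1i$ in the budget, so the optimum raises $h_i$ to $1$ from the largest index downward; taking $h_i=1$ for $i$ above a threshold $t$ costs $\sum_{i>t}1/i\approx\ln(n/t)$, which stays within budget exactly when $t$ is at least about $n/e$, giving objective value at most $\frac1n(n-n/e)=1-1/e$. (Formally one splits the sum at $n/e$, bounds $\sum_{i\le n/e} h_i\le(n/e)\sum_{i\le n/e} h_i/i$, bounds the upper tail by a harmonic‑sum/integral estimate, and checks that the resulting bound is decreasing in the budget spent below $n/e$, hence maximized at the value $1-1/e$; the $O(1/n)$ slack from replacing harmonic sums by integrals disappears on letting $n\to\infty$ in the minimum over instances.) Since the $t$‑Threshold Algorithm with $t=n/e$ attains $1-1/e$, the bound is tight.

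The step I expect to be the main obstacle is the ordinal reduction invoked above: rigorously arguing that an online algorithm cannot profit from the actual item values against a worst‑case value set, together with the companion identity $\bbP(C_j\mid J=j)=\bbP(C_j\mid\mathrm{Rec}_j)$ used for the objective. Once that is in place, the remaining ingredients---linearity of expectation over the uniform permutation, the independence of the record events, and a one‑variable optimization---are routine.
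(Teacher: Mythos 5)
Your proposal is correct and follows essentially the same route as the paper's proof: both use Observation~\ref{obs:rec} to reduce to ``start at a record, then take every subsequent record'' rules, equate the expected number of selections with a harmonic sum and the success probability with the fraction of positions covered, and conclude from $\sum_{j>t}1/j\le 1$ that the start threshold must exceed $n/e$, giving $1-1/e$. Your LP over the per-position start probabilities $h_i$ is a slightly more careful version of the paper's argument (which passes directly to a single deterministic first-selection location $i$), and like the paper you leave the ordinal reduction asserted rather than proved, but the substance is the same.
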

\begin{proof}
Consider an optimal algorithm $\opt$ for solving the secretary problem under expected capacity constraint of $1$. Let $p_r$ be the probability that it selects $r$ items at the end of the input sequence $\sigma$. Let $q_r$ be the probability that $r\ge 1$ items selected by $\opt$ contain the best item.

Thus, the lower bound on the competitive ratio under the expected capacity constraint of $1$ is 

\begin{equation}
\label{eq:LPsec}
\begin{array}{c l} \max &\sum_{r=1}^n p_r q_r, \\
\sum_{r=1}^n r p_r  \le 1,\\
p_r \in [0,1], \ \forall \ 1\le r\le n.
\end{array}
\end{equation}

Then in light of Observation \ref{obs:rec}, we have $q_r=1$ for $r>0$, since $\opt$ will select exactly $r>0$ items only when the globally best item is the $r^{th}$ item to be selected, otherwise no item is selected. Thus, we get the lower bound \eqref{eq:LPsec} as 
\begin{equation}
\label{eq:LPsec2}
\begin{array}{c l} \max &\sum_{r=1}^n p_r \\
\sum_{r=1}^n r p_r  \le 1,\\
p_r \in [0,1], \ \forall \ 1\le r\le n.
\end{array}
\end{equation}Thus, equivalently we have to solve for 
\begin{equation}
\label{eq:LPsec3}
\begin{array}{c l} \min &p_0, \\
\sum_{r=1}^n r p_r  \le 1,\\
p_r \in [0,1], \ \forall \ 1\le r\le n,
\end{array}
\end{equation}
where $\sum_{r=1}^n p_r= 1-p_0$, $p_0$ is the probability that $\opt$ misses out on selecting the globally best item. 
To minimize $p_0$, the optimal algorithm needs to start selecting items arriving at the earliest location possible, but in light of Observation $1$, if $\opt$ selects item at location $i$, then it will always select better items arriving after location $i$, increasing the number of selected items. An item arriving at location $i$ is selected by $\opt$ only if it is the best item seen so far, which happens with probability $1/i$. Thus, if $i$ is the first location at which $\opt$ decides to select an item if that item is the best seen so far, then $p_0 = 1-i/n$, and the expected number of items selected by the algorithm is $\sum_{j=i}^n \frac{1}{j}$.
Thus, \eqref{eq:LPsec3} is equivalent to 
\begin{equation}
\label{eq:LPsec3}
\begin{array}{c l} \min &1-i/n, \\
\sum_{j=i}^n \frac{1}{j}  \le 1, 1\le i\le n.
\end{array}
\end{equation}
Clearly, $\sum_{j=i}^n \frac{1}{j}$ can be well approximated by $\int_i^n \frac{1}{x} dx$. Hence, the constraint $\sum_{j=i}^n \frac{1}{j} \approx \int_i^n \frac{1}{x} dx \le 1$ implies that $i > n/e$, which implies that $p_0 \le 1-1/e$.
\end{proof}

Theorem \ref{thm:1sec}  and \ref{thm:1sec:lb} together characterize the optimal competitive ratio for the secretary problem under the expected capacity constraint. The classical secretary problem with hard capacity constraint is a richly studied object whose mutliple variants have been studied. To the best of our knowledge,  enforcing the capacity constraint in expectation is rather novel, and the more interesting upshot is to note that with a relaxation in capacity constraint, there is a significant improvement in the competitive ratio, and the optimal competitive ratio can be exactly characterized.

Next, we consider the natural generalization of the secretary problem, where more than one secretary can be selected, called the $k$-secretary problem.

%.
%
%Essentially, the number of items selected by $\opt$ if it selects an item at location $i$ is the equal to the length of the longest increasing subsequence in 
%$\pi$ after the $i^{th}$ location. 
%Thus, $\opt$ has to strike a balance in minimizing $p_0$ and satisfying the average capacity constraint. To find the earliest location at which $\opt$ can select items if that item is the best one seen so far while ensuring that the expected capacity constraint is satisfied, one needs to find the distribution of increasing subsequences in a permutation which in general remains open.
%
%Since we are looking for a lower bound that applies to all values of $n$, we follow an enumerative approach for $n=4$, where it is easy to compute that if $\opt$ starts to select from the first itemm then $p_0=0$ but leads to expected capacity of $\sum_{r=1}^n r p_r> 2$, while if $\opt$ starts to select from the second item, $p_0=3/4$ but leads to expected capacity of $\sum_{r=1}^n r p_r=25/24$. (Details are in Appendix \ref{app:sec}). Thus, the earliest item from which $\opt$ can start to select items is $3$ for which $p_0=1/2$. 

\subsection{$k$-Secretary Problem}\label{sec:ksec}
In the classical $k$-secretary problem with a hard capacity constraint, each item has weight $1$ and an online algorithm can select at most $k$-items so as to maximize
\begin{equation}\label{eq:probksec} 
\mu_A =  \min_{S \subseteq \cI, |S|\le k} \frac{\bbE_{\pi}\{v(S)\}}{v(\cI^\star_k)}.
\end{equation}
where $S$ is the set of items selected by $A$ with $|S|\le k$,
$v(S) = \sum_{s\in S}v(s)$ for any subset $S \subseteq \cI$, and $\cI^\star_k$ is the best $k$-sized subset of $\cI$ in terms of the sum of the values.

With the expected capacity constraint of $k$, the objective function remains the same as in \eqref{eq:probksec}, except now the constraint is that the set of items $S$ selected by an online algorithm should satisfy $\bbE\{|S|\}\le k$.

%Let the set of items be $\cI$, where each item $i \in \cI$ arrives sequentially, and has to be accepted or rejected on its arrival irrevocably. 
%We consider additive value functions $v(S) = \sum_{s\in S}v(s)$ for any subset $S \subseteq \cI$. Let $k$ be the number of items to be selected from the arriving set of items $\cI$, and the optimal subset  be  $\cI^\star_k \subseteq \cI$ that contains the $k$  items with the largest values  in $\cI$. Let an online algorithm $\cA$ select the subset $S \subseteq \cI$ of at most $k$ item, then its competitive ratio under the secretarial input model is  
%\begin{equation}\label{eq:probksec} 
%\mu_\cA =  \min_{S \subseteq \cI, |S|\le k} \frac{\bbE_{\pi}\{v(S)\}}{v(\cI^\star_k)}.
%\end{equation}

Using linearity of expectation, to find a lower bound on the competitive ratio \eqref{eq:probksec}, it is sufficient to focus on minimum probability of selecting any item that belongs to the optimal subset $\cI^\star_k$. Towards that end, we propose a simple modification to the $t$-Threshold Algorithm as follows.

\begin{algorithm}
\caption{K-Sec $t$-Threshold Algorithm}\label{alg:tk}
\begin{algorithmic}[1]
\State \%{\bf Offline Phase} 
\State Do not select the first $t$ items 
$\cI_t(\pi) \subset \cI$ under permutation $\pi$  
\State $R$ = best $k$-item subset of $\cI_t(\pi)$
%\State $\text{price}(r) = b(e)$ for $e = (\ell, r) \in \sfM_{t}$
\State Order the items of $R$ in decreasing $\text{value}(r), r\in R$, the item with the least value is $r_{k}$
\State \%{\bf Decision/Online Phase} 
\State Initialize $S = \Phi$ \%The set to be selected
\
\State For every new item $i\ge t$ in the decision phase
\If{$v(i) > v(r_k)$}

\State Select item $i$
		\State  $S = S\cup\{i\}$
		\State  $R = R \cup \{i\} \backslash \{r_k\}$
		\State Order the items of $R$ in decreasing order of value
		\State $r_k$ is the item of $R$ with least value.
	\Else \State Do not select item $i$
\EndIf
\end{algorithmic}
\end{algorithm}
\begin{theorem}\label{thm:ksec} K-Sec $t$-Threshold Algorithm with $t=n/e$ is an optimal online algorithm for the $k$-secretary problem, with competitive ratio $1-1/e$  and  satisfies the expected capacity constraint of $k$. 
\end{theorem}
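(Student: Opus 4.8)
The plan is to handle the three assertions in turn — competitive ratio at least $1-1/e$, the expected‑capacity bound, and optimality — the first two going through essentially as in Theorem~\ref{thm:1sec} once the right invariant is isolated. \textbf{Competitive ratio.} The key structural fact is the invariant that at every step of the decision phase the set $R$ maintained by Algorithm~\ref{alg:tk} equals the set of the $k$ highest‑valued items among all items revealed so far (the $t$ offline items together with the online items already processed). This holds right after the offline phase, and it is preserved at each online arrival $i$: if $v(i)>v(r_k)$ then $i$ enters the top $k$ of the enlarged prefix and $r_k$ is precisely the element that leaves it (the update rule of the decision phase), while if $v(i)\le v(r_k)$ then $i$ is not in the new top $k$ and $R$ is correctly left unchanged. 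Granting the invariant, fix any $j\in\cI^\star_k$ with global rank $\rho\le k$: whenever $j$ arrives at a position $\ell>t$, at most $\rho-1\le k-1$ items of larger value can have appeared before it, so $j$ is among the $k$ best of the first $\ell$ arrivals, i.e. $v(j)>v(r_k)$, and $j$ is selected. Hence $\bbP_\pi(j\in S)=\bbP_\pi(\pi(j)>t)=1-t/n=1-1/e$ for $t=n/e$, and by linearity
\[
\bbE_\pi\{v(S)\}\ \ge\ \sum_{j\in\cI^\star_k}v(j)\,\bbP_\pi(j\in S)\ =\ \Big(1-\tfrac1e\Big)\,v(\cI^\star_k),
\]
so $\mu_A\ge 1-1/e$.

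\textbf{Expected capacity.} Let $\b1_\ell$ indicate that the item at position $\ell$ is selected. By the same invariant this occurs exactly when that item is among the $k$ best of the first $\ell$ arrivals, an event of probability $k/\ell$ for $\ell\ge k$ under a uniform permutation; since $t=n/e\ge k$ (for $n$ not too small) this applies to every $\ell>t$, and exactly as in Theorem~\ref{thm:1sec},
\[
\bbE\{|S|\}\ =\ \sum_{\ell=t+1}^{n}\bbE\{\b1_\ell\}\ =\ \sum_{\ell=n/e+1}^{n}\frac{k}{\ell}\ \le\ k\int_{n/e}^{n}\frac{dx}{x}\ =\ k .
\]

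\textbf{Optimality.} For the matching upper bound I would follow the proof of Theorem~\ref{thm:1sec:lb}. A $k$‑item version of Observation~\ref{obs:rec} shows that, on a worst‑case instance, an optimal online algorithm never benefits from selecting an item that is not currently among the $k$ best seen, and once it begins accepting such items it can only lose by stopping; hence it has the form ``skip an initial block of length $t$, then accept every arrival that is among the $k$ best seen so far.'' For such an algorithm the probability of selecting any fixed one of the $k$ optimal items is $1-t/n$, while the expected number of accepted items is $\sum_{\ell>t}k/\ell\approx k\ln(n/t)$; imposing $\bbE\{|S|\}\le k$ forces $t\ge n/e$, so the selection probability of a given optimal item — and, on an instance where the objective is governed by these probabilities, the competitive ratio — is at most $1-1/e$. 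Combined with the first two parts, this pins the optimal competitive ratio at exactly $1-1/e$ and identifies the $t=n/e$ algorithm as optimal.

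\textbf{Main obstacle.} The invariant and the capacity integral are routine; the real work is the optimality half. One must (i) make precise the $k$‑item analogue of Observation~\ref{obs:rec}, i.e. reduce an arbitrary online rule to the ``skip‑then‑accept‑top‑$k$‑so‑far'' form without loss of competitive ratio, and (ii) establish the accompanying combinatorial estimate that keeping the acceptance probability of all $k$ optimal items close to $1$ necessarily drives $\bbE\{|S|\}$ above $k$ — together with a choice of worst‑case instance on which the competitive ratio is controlled by these acceptance probabilities. This is the place to anchor the induction at $k=1$ by invoking the already‑proved Theorem~\ref{thm:1sec:lb}.
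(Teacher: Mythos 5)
For the two quantitative claims your argument is the paper's argument, just with the bookkeeping made explicit: your invariant that $R$ is always the set of the $k$ highest-valued items revealed so far is exactly what the paper uses implicitly when it says an item at location $\ell$ is selected iff it is ``among the $k$ best items seen so far'' (probability $k/\ell$), and your conclusion that each member of $\cI^\star_k$ is selected iff it lands after position $t$, giving probability $1-1/e$ and, by linearity, competitive ratio $1-1/e$, together with the bound $\sum_{\ell>n/e}k/\ell\le k\int_{n/e}^{n}dx/x=k$, is verbatim the paper's proof. These parts are correct and complete.

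Where you diverge is the optimality half, and here the comparison is instructive. You propose to redo the whole lower-bound machinery for general $k$ (a $k$-item analogue of Observation~\ref{obs:rec}, a reduction to ``skip-then-accept-top-$k$'' rules, and the constraint $k\ln(n/t)\le k\Rightarrow t\ge n/e$), and you correctly flag that the reduction step is the real work and leave it open. The paper does none of this: its entire optimality argument is the single sentence that the upper bound follows from Theorem~\ref{thm:1sec:lb} because it already holds for $k=1$. Your instinct that something needs to be checked here is sound --- the $k=1$ bound does not transfer verbatim, because in the $k$-secretary problem the expected-capacity budget is $k$ rather than $1$, so an algorithm facing an instance with a single dominant item could spend its entire budget of $k$ expected selections hunting for that one item and beat $1-1/e$ on that instance; one therefore needs a worst-case instance (e.g., $k$ items of equal value and the rest negligible) on which the ratio is governed by the individual selection probabilities and the budget forces $t\ge n/e$, which is essentially the calculation you sketch. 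So: your proposal is more honest about the difficulty but does not close it, while the paper closes it by assertion. Relative to what the paper actually proves, your submission establishes the achievability and capacity claims in the same way and leaves the optimality claim at the level of a plan rather than a proof.
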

\begin{proof} With $t=n/e$, it is easy to see that any item belonging to the set $\cI_k^\star$ is not selected only if it appears in the offline phase $\cI_t$, which happens with probability $1/e$. So any item in $\cI_k^\star$ is selected with probability $1-1/e$. Therefore, the competitive ratio of K-Sec $t=n/e$-Threshold Algorithm, following \eqref{eq:probksec}, is $1-1/e$.

So we only need to check that if the algorithm satisfies the expected capacity constraint. 
Let $\b1_{\ell}$ be the indicator function that the item appearing at the $\ell^{th}$ location is selected by the algorithm. Then the number of selected items by the algorithm is $\sum_{\ell=n/e+1}^{n} \b1(\ell)$. By the definition of the algorithm, item arriving at location $\ell$ is selected only if it is among the $k$ best items seen so far, which happens with probability $k/\ell$. Thus,  $\b1_{\ell} = 1$ with probability $k/\ell$.
Using linearity of expectation, we have that the expected number of selected items is 
\begin{align}\nn
\bbE\{\# \text{selected items}\} &\le \sum_{\ell=n/e+1}^{n} \bbE\{\b1_{\ell}\} =  k\sum_{\ell=n/e+1}^{n}1/\ell,\\\nn
&\le k\int_{n/e}^n \frac{1}{x}dx,\\
&\le k.
\end{align}

%Let $N_g$ be the number of items selected until the globally best item is selected by the K-Sec $t=n/2$-Threshold Algorithm (including the globally optimal one). Clearly, once the globally best item is selected, at most $k-1$ more items are selected by the algorithm. Thus, the total number of items selected by the K-Sec $t=n/2$-Threshold Algorithm is at most $N_g+k-1$.
%From Theorem \ref{thm:1sec}, we know that $\bbE\{N_g\} \le 1$. Thus, the 
%$\bbE\{\# \text{items selected by the algorithm}\} \le k$.

The optimality of the algorithm follows from Theorem \ref{thm:1sec:lb}, since the competitive ratio is lower bounded by $1-1/e$ even for the $k=1$-secretary problem. 
\end{proof}
Thus, exploiting the linearity of expectation, we can get the same competitive ratio of $1-1/e$ for the $k$-secretary problem with $k>1$ similar to the $k=1$ case. This behaviour is identical to the case of hard capacity constraint, where also the optimal competitive ratio is $1/e$ for all values of $k$. Thus, relaxing the hard capacity constraint to an expected capacity constraint has identical performance advantage for the $k$-secretary problem independent of the value of $k$. Now, we are ready to consider the general online knapsack problem, where the weight of items is arbitrary, under the expected capacity constraint.

\section{Knapsack Problem}\label{sec:kp}
In this section, we consider the general knapsack problem under the expected capacity constraint of $1$. 
We will take a different approach for solving this general case compared to the special case of $k$-secretary problem studied in last subsection, where the weights of all items were identical. 

Before dealing with the online version of the knapsack problem, it is instructive to discuss its linear programming (LP) relaxation offline version, where each item can be selected fractionally, as follows.
%Without loss of generality, let $C=1$, and scale the weights of each item by $1/C$.
The LP formulation for the fractional offline knapsack problem with knapsack size $C$ is given by,  
\begin{equation}
\label{eq:LP}
\begin{array}{c l} \max &\sum_{i \in \cI} v(i) x(i), \\
 \sum_{i \in \cI} w(i) x(i) \le C,\\
x(i) \in [0,1], \ \forall \ i\in \cI,
\end{array}
\end{equation}
where we have relaxed the condition that $x(i) \in \{0,1\}$ to $x(i) \in [0,1]$.

%Typically, the ratio $\frac{v(i)}{w(i)}$ called the {\it value density} is used to describe the solution of the knapsack problem. We instead work with the inverse of the value density,  called the {\it buck-per-bang} $\sfb(i) = \frac{w(i)}{v(i)}$. This choice is 
%made for seamless transition between different sections. 
The following two facts are well-known \cite{babaioff2007knapsack} for the fractional knapsack problem.

\begin{lemma}\label{lem:fact1kp}
Recall that $\sfb(i) = \frac{w(i)}{v(i)}$.
There exists a threshold $\sfb^\star(C)$, such that all items $i$ with $\sfb(i) < \sfb^\star(C)$ are selected completely $x^\star(i)=1$ in \eqref{eq:LP}, while items $j$ with $\sfb_j > \sfb^\star(C)$ are not selected at all, $x^\star(j)=0$ in  \eqref{eq:LP}.
The only non-triviality is for items with $\sfb(i) = \sfb^\star(C)$, where the relaxed solution may be non-integral. 
\end{lemma}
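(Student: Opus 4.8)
The plan is to exhibit an optimal solution of the LP \eqref{eq:LP} that has the claimed threshold structure, via an exchange argument. First I would observe that the LP is a continuous relaxation of the knapsack problem over the compact feasible set $\{x : \sum_i w(i)x(i) \le C,\ 0 \le x(i) \le 1\}$, so by linear programming theory an optimal solution exists and can be taken to be a vertex of this polytope. Sort the items in increasing order of buck-per-bang $\sfb(i) = w(i)/v(i)$; intuitively, an item with smaller $\sfb(i)$ gives more value per unit of weight, so the greedy thing to do is fill the knapsack with the smallest-$\sfb$ items first.

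The key step is the exchange argument. Suppose $x^\star$ is an optimal solution and suppose, for contradiction, that there exist items $i, j$ with $\sfb(i) < \sfb(j)$ yet $x^\star(i) < 1$ and $x^\star(j) > 0$. Then I would shift a small amount of weight $\delta > 0$ from $j$ to $i$: decrease $x^\star(j)$ by $\delta / w(j)$ and increase $x^\star(i)$ by $\delta / w(i)$, choosing $\delta$ small enough that both variables stay in $[0,1]$. This keeps the total weight constraint $\sum_i w(i)x(i)$ unchanged, and changes the objective by
\[
\delta\left(\frac{v(i)}{w(i)} - \frac{v(j)}{w(j)}\right) = \delta\left(\frac{1}{\sfb(i)} - \frac{1}{\sfb(j)}\right) > 0,
\]
contradicting optimality (or, in the case of equality of objective, showing we may move to a solution with fewer "split" items). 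Iterating this, every item with $\sfb(i)$ strictly below some critical value is set to $1$ and every item with $\sfb(i)$ strictly above it is set to $0$; define $\sfb^\star(C)$ to be that critical value, i.e. the buck-per-bang of the item at which the knapsack becomes full as we add items in sorted order. Items sharing the exact value $\sfb(i) = \sfb^\star(C)$ are the only ones whose optimal value may be fractional, since the residual capacity after packing all strictly-better items need not be an integer multiple of any single such item's weight.

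I do not expect a serious obstacle here; the main things to handle carefully are the boundary bookkeeping — ties in $\sfb$, the degenerate cases where $C$ is large enough to fit all items (then $\sfb^\star(C) = \infty$ and all $x^\star(i) = 1$) or where the sorted prefix exactly fills capacity — and the argument that the exchange step can always be performed with a strictly positive $\delta$ whenever the stated violation exists. Since this is a "well-known" fact cited to \cite{babaioff2007knapsack}, a brief exchange-argument sketch together with the observation about the single fractional class should suffice.
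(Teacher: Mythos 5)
Your proposal is correct and uses essentially the same exchange argument as the paper, which sorts items by buck-per-bang and shifts weight from a positive-valued item with larger $\sfb$ to an unsaturated item with smaller $\sfb$ while keeping the capacity usage fixed. In fact your version is slightly more careful than the paper's one-sentence sketch, since you take a small transfer $\delta$ that keeps both variables in $[0,1]$ rather than moving all of $x(j+1)$ at once, and you note the degenerate and tie-breaking cases explicitly.
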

To prove this, arrange the items in increasing order of $\sfb(i)$. Let there be an index $j$ such that $x(j) < 1$ but  $x(j+1) >0$. Then claim that $x(j) = x(j)+x(j+1) \frac{w(i+1)}{w(i)}$ and $x(j+1) = 0$ increases the value of the objective function, while still being capacity feasible.

Let $x_C^*(i)$ be the optimal fractional solution \eqref{eq:LP} with capacity $C$ and the corresponding optimal value of \eqref{eq:LP} be 
$v_{C}(\cI) = \sum_{i \in \cI} v(i) x_C^*(i)$.
\begin{lemma}\label{lem:fact2kp}
For $C_2\ge C_1$, we have that 
\begin{equation}\label{eq:scaling} 
v_{C_2}(\cI) \le \left(\frac{C_2}{C_1}\right) v_{C_1}(\cI).
\end{equation} \end{lemma}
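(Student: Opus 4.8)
The plan is to exploit the fact that the optimal fractional knapsack solution is completely described by the greedy threshold rule of Lemma \ref{lem:fact1kp}: sort items in increasing order of $\sfb(i)$ and fill the knapsack greedily until the capacity is exhausted. The key structural observation is that the solution $x_{C_2}^*$ for the larger capacity $C_2$ is obtained from $x_{C_1}^*$ by continuing to fill in the \emph{same} greedy order — so $x_{C_2}^*(i) \ge x_{C_1}^*(i)$ for every $i$, and the ``extra'' mass that $x_{C_2}^*$ picks up uses weight exactly $C_2 - C_1$ and consists of items whose buck-per-bang $\sfb(i)$ is at least $\sfb^\star(C_1)$, i.e. at least as large as that of every item fully selected by $x_{C_1}^*$.

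First I would make the simple scaling comparison: the scaled vector $\frac{C_2}{C_1} x_{C_1}^*$ is feasible for the capacity-$C_2$ LP only if its entries stay in $[0,1]$, which need not hold, so one cannot conclude \eqref{eq:scaling} by naive scaling alone — this is the main subtlety to get around. Instead, I would argue on the ``value density'' (value per unit weight). Write $v_{C}(\cI) = \int_0^{C} \rho(s)\,ds$ where $\rho(s)$ is the value density of the greedy solution at the point where $s$ units of weight have been used: $\rho$ is a nonincreasing step function (densities $1/\sfb(i)$ taken in decreasing order of density, i.e. increasing order of $\sfb(i)$). Then $v_{C_1}(\cI) = \int_0^{C_1}\rho(s)\,ds$ and $v_{C_2}(\cI) = \int_0^{C_2}\rho(s)\,ds = \int_0^{C_1}\rho(s)\,ds + \int_{C_1}^{C_2}\rho(s)\,ds$. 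Since $\rho$ is nonincreasing, the average of $\rho$ over $[0,C_2]$ is at most its average over $[0,C_1]$, i.e. $\frac{1}{C_2}v_{C_2}(\cI) \le \frac{1}{C_1}v_{C_1}(\cI)$, which rearranges to \eqref{eq:scaling}.

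The one point that needs care is the boundary item with $\sfb(i) = \sfb^\star(C)$, which may be selected fractionally; but this is harmless because the density function $\rho$ is defined pointwise in weight, so a fractionally selected item simply contributes a partial step, and the monotonicity of $\rho$ is unaffected. I would also note the degenerate case where $C_2$ exceeds the total weight $\sum_i w(i)$, in which case $v_{C_2}(\cI) = v(\cI)$ and the bound still holds since $\rho(s) = 0$ for $s$ beyond the total weight (so the tail integral only helps). I expect the monotonicity-of-the-average step to be the conceptual crux, and everything else is bookkeeping about the greedy structure already established in Lemma \ref{lem:fact1kp}.
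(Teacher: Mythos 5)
Your argument is correct. Note that the paper itself does not prove this lemma --- it simply states Lemmas \ref{lem:fact1kp} and \ref{lem:fact2kp} as ``well-known'' facts about the fractional knapsack (citing \cite{babaioff2007knapsack}), adding only the remark that \eqref{eq:scaling} can fail for integral solutions. Your density argument is the standard proof: writing $v_C(\cI)=\int_0^C\rho(s)\,ds$ with $\rho$ nonincreasing makes $C\mapsto v_C(\cI)$ concave with $v_0(\cI)=0$, so the running average $v_C(\cI)/C$ is nonincreasing, which is exactly \eqref{eq:scaling}; you also correctly flag that naive scaling of $x_{C_1}^*$ by $C_2/C_1$ does not work because of the box constraints $x(i)\le 1$, and you handle the fractional boundary item and the case $C_2>\sum_i w(i)$ properly.
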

Note that property \eqref{eq:scaling} may not be true for an integral optimal solution.

We will first define an offline knapsack algorithm that is allowed to use a larger capacity $C >1$, similar to \cite{iwama2010online}.  We then make it online with the help of sample and price class of strategies e.g. $t$-Threshold algorithm used in typical secretary or $k$-secretary problems, where the algorithm only observes (but does not select any) an initial set of items and builds a threshold, which is then used to select the forthcoming items.

{\bf Algorithm $\off$}: Order all the items in $\cI$ in non-decreasing order of their buck-per-bang $\sfb(i)$. Select as many items in the indexed order starting from the first, subject to the augmented capacity constraint of $C$. Thus, $\off$ selects the first $k$ indexed items if $\sum_{i=1}^k w(i) \le C$, and $\sum_{i=1}^{k+1} w(i) > C$. Let $\sfb^\star$ be the threshold on the buck-per-bang of all items selected by the $\off$ algorithm, i.e., 
$\sfb(i) \le \sfb^\star$ for all $i = 1,\dots, k$.

\begin{lemma}\label{lem:aug} Algorithm $\off$ with $1<C\le 2$ is $(C-1)$-approximate to the optimal solution of the offline knapsack problem \eqref{eq:LP} with hard capacity constraint $1$, where an offline algorithm is $\alpha$ approximate if the profit of the algorithm is at least $\alpha<1$ times the 
optimal offline algorithm's profit.
\end{lemma}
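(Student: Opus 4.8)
The plan is to work entirely with the fractional-knapsack value $v_C(\cI)$ (the optimum of \eqref{eq:LP} at capacity $C$) and to exploit the greedy characterization of Lemma~\ref{lem:fact1kp} together with the scaling inequality of Lemma~\ref{lem:fact2kp}. Write $W_k = \sum_{i=1}^{k} w(i)$ for the total weight of the prefix $\{1,\dots,k\}$ that $\off$ selects. I would first dispense with the degenerate case in which all items fit, $\sum_{i\in\cI} w(i)\le C$: then $\off$ takes all of $\cI$ and $v(\off)=v(\cI)\ge v_1(\cI)\ge (C-1)\,v_1(\cI)$, so the claim holds. Otherwise the item $k+1$ exists and satisfies $W_k\le C< W_{k+1}=W_k+w(k+1)$; since $w(k+1)\le 1$ this yields the single inequality that drives the whole proof, namely $W_k> C-1$. (Also $W_1=w(1)\le 1<C$, so $k\ge1$ and $\off$ is non-empty.)

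The second step is to identify $v(\off)$ with a fractional optimum. On one hand $\{1,\dots,k\}$ is a feasible (integral, hence fractional) solution of \eqref{eq:LP} at capacity $W_k$, so $v_{W_k}(\cI)\ge v(\off)$. On the other hand, by Lemma~\ref{lem:fact1kp} the fractional optimum at capacity $W_k$ fills items in increasing buck-per-bang order, and since the first $k$ items already use up weight exactly $W_k$ it selects precisely $\{1,\dots,k\}$; hence $v_{W_k}(\cI)=\sum_{i=1}^{k}v(i)=v(\off)$. A tie in $\sfb(\cdot)$ at the truncation index does not affect this, because items of equal buck-per-bang have equal value-to-weight ratio and the capacity is exhausted exactly.

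The core estimate is the elementary bound $v_W(\cI)\ge \min\{1,W\}\cdot v_1(\cI)$ valid for every $W>0$: for $W\ge1$ it is just monotonicity of $C\mapsto v_C(\cI)$, and for $W<1$ it is Lemma~\ref{lem:fact2kp} with $C_2=1\ge C_1=W$, giving $v_1(\cI)\le \tfrac{1}{W} v_W(\cI)$. Applying it with $W=W_k$, and using $W_k>C-1$ together with $C-1\le 1$ (so that $\min\{1,W_k\}\ge \min\{1,C-1\}=C-1$), gives $v(\off)=v_{W_k}(\cI)\ge (C-1)\,v_1(\cI)$. Since the fractional optimum $v_1(\cI)$ of \eqref{eq:LP} dominates the value of every integral feasible solution, in particular the optimal offline knapsack solution with hard capacity $1$, the same ratio holds against the integral optimum, which is the assertion of the lemma.

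I do not anticipate a real obstacle; the argument is mostly bookkeeping with the scaling lemma. The one place demanding a little care is Step~2 --- certifying that the prefix chosen by $\off$ is not merely feasible but is actually a fractional optimum at capacity $W_k$, and dealing cleanly with ties in the buck-per-bang ordering at the boundary item --- and the handful of edge cases ($\off$ empty, $\off=\cI$, or $v_1(\cI)=0$), all of which are immediate.
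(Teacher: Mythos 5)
Your proposal is correct and follows essentially the same route as the paper: both hinge on the observation that $\sum_{i=1}^k w(i) > C-1$ (since $w(k+1)\le 1$), then relate $v(\off)$ to the fractional optimum at a capacity exceeding $C-1$, and finish with the scaling inequality of Lemma~\ref{lem:fact2kp} using $C\le 2$. Your variant of routing through the exact prefix weight $W_k$ rather than through $v_{C-1}(\cI)$, and your explicit treatment of ties and edge cases, are minor refinements of the same argument, not a different proof.
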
 
\begin{proof} Consider the set $S_{\opt}$ and $S_{\off}$ of items selected by the optimal fractional offline algorithm with capacity $1$ \eqref{eq:LP}, and the algorithm $\off$ with capacity $C$ from the full set of items $\cI$, respectively. By definition, the value obtained by $S_{\opt}$ with capacity $1$ is $v_{1}(\cI) = \sum_{i \in \cI} v(i) x_1^*(i)$, and the value of $\off$ is $v_{S_\off}(C) = \sum_{i=1}^k v(i)$.

%Either, $v(S_{\opt}) \le v(S_{\off})$, in which case the Lemma follows easily or 
%$v(S_{\opt}) \ge v(S_{\off})$. If $v(S_{\opt}) \ge v(S_{\off})$, we will argue that $v(S_{\opt}) \le \frac{v(S_{\off})}{C-1}$ for $C\le 2$.

By definition, the set $S_{\off}$ is the set of $k$ items ordered in non-decreasing order of their buck-per-bang, where the first $k$ of them satisfy $\sum_{i=1}^k w(i) \le C$, and $\sum_{i=1}^{k+1} w(i) > C$ where items are indexed in non-decreasing order of their buck-per-bang. 

Moreover, since $w(i) < 1$ for each item $i$, we have 
\begin{equation}\label{eq:aug1}\sum_{i=1}^{k} w(i) > C-1.
\end{equation}

%From Lemma \ref{lem:fact1kp}, we know that the fractional optimal solution \eqref{eq:LP} also selects all items with buck-per-bang less than $\sfb^\star$. 
Therefore, \eqref{eq:aug1} implies that the first $k$ items of $\cI$ indexed in non-decreasing order of buck-per-bang require capacity more than $C-1$. Since the fractional optimal solution \eqref{eq:LP} also selects items in non-decreasing order of their buck-per-bang (Lemma \ref{lem:fact1kp}), we get that if the knapsack capacity was $C-1$, then the optimal fractional solution \eqref{eq:LP} would not have selected any item that is not selected by $\off$, i.e., $x^\star(k+1) = 0$ in \eqref{eq:LP} with capacity $C-1$.
Thus, we get  
\begin{equation}\label{eq:aug2}
v_{C-1}(\cI) \le v_{S_\off}(C) = \sum_{i=1}^k v(i).
\end{equation} 
%since otherwise $w_{S_{\opt}(t)} >1$.
Moreover, from Lemma \ref{lem:fact2kp}, with $C \le 2$, 
$$ v_1(\cI) \le \frac{v_{C-1}(\cI)}{C-1},$$ which combining with \eqref{eq:aug2}, we get 
$$ v_1(\cI) \le \frac{\sum_{i=1}^k v(i)}{C-1} = \frac{v_{S_\off}(C)}{C-1}, $$
proving the claim.
%From the definition of the greedy algorithm, we also have that for each $ 
%j \in S_{\opt}(t)$
%there is an  $i \in S_{\off}$ such that
%\begin{align}\nn
% \frac{w(i)}{v(i)} &\le  \frac{w(j)}{v(j)}, \\\label{eq:aug3}
% \sum_{i\in S_\off} w(i)  &\le \frac{w(j)}{v(j)} \sum_{i\in S_\off}v(i), 
%\end{align} 
%which from \eqref{eq:aug1} implies that 
%$$v(j) \le \frac{\sum_{i\in S_\off}v(i)}{C-1} w_j,$$
%for all $j \in S_{\opt}(t)$.
%Therefore, 
%\begin{eqnarray}\nn
%v(\opt(t)) &= &\sum_{j \in S_{\opt}(t)} v(j),\\\nn
%& \stackrel{(a)}{\le} &\sum_{j \in S_{\opt}(t)}\frac{\sum_{i\in S_\off}v(i)}{C-1} w(j), \\ \nn
%&=&\frac{\sum_{i\in S_\off}v(i)}{C-1}\sum_{j \in S_{\opt}(t)}w(j),\\\label{eq:aug4}
%&  \stackrel{(b)}{\le}& \frac{\sum_{i\in S_\off}v(i)}{C-1},
%\end{eqnarray}
%where $(a)$ follows from \eqref{eq:aug3}  and $(b)$ since $\sum_{j \in S_{\opt}(t)}w(j) \le 1$ for the optimal offline algorithm with capacity $1$.  
%Note that \eqref{eq:aug4} is valid for each item $i \in S_\off$ and we get 
%$$v(i) \ge v(\opt(t))(C-1) \frac{w(i)}{\sum_{i=1}^k w(i)}.$$ Thus, summing over all items $i \in S_\off$, we get 
%\begin{align}
%v(S_\off) &= \sum_{i \in S_\off}v(i), \\
%&\ge \sum_{i \in S_\off} v(\opt(t))(C-1) \frac{w(i)}{\sum_{i=1}^k w(i)} \\
%&= v(\opt(t))(C-1).
%\end{align}

\end{proof}

\subsection{$2e$-Competitive Online Algorithm with Capacity $C$}
Before prescribing an online algorithm for the knapsack problem under the expected capacity constraint of $1$, we first take a detour via proposing an online version of the algorithm $\off$ that uses augmented capacity $C$. In the online setting, we aim to select as many items among the set of items $S_{\off}$ selected by $\off$ with capacity $C$ that have buck-per-bang greater than on equal to $\sfb^\star$.

To achieve this objective, following prior work \cite{VazeInfocom2017} we need to make an extra assumption (Assumption \ref{ass:randweight}) that is reasonable for most practical purposes. 
\begin{assumption}\label{ass:randweight} We assume that given two items arriving at locations $\pi(i)$ and $\pi(j)$, if $\sfb(i) > \sfb(j)$,  then $P(w(i)> w(j)) = \frac{1}{2}$ which is reasonable for most applications.
\end{assumption}

Consider the following online algorithm $\mathsf{AUG-ON}$. Divide the input into two phases; offline (first $n/e$ items) followed by online/decision (last $n(1-1/e)$ items). The algorithm observes the first $n/e$ items and does not select any of them. The offline algorithm $\off$ with capacity $C$ is run at the end of the offline phase (over the first $n/e$ items). Let $S_{\off}^{1/2}$ be the set of items that are selected by the $\off$ algorithm  in the offline phase, and the buck-per-bang threshold be $\sfb^\star_{1/2}$. 

In the online/decision phase, we will use a modified \textsc{Virtual} algorithm for the $k$-secretary problem \cite{babaioff2007knapsack}, starting from the arrival of $n/e+1^{st}$ item. At the beginning of the decision phase, we initialize the reference set as
$R = S_{\off}^{1/2}$, and $k= |S_{\off}^{1/2}|$. Thus, the algorithm aims to select  $k$ items as the 
$\off$ did in the sampling phase.
Moreover, in the decision phase only items with $\sfb(i) < \sfb^\star_{1/2}$ are eligible for selection, where the eventual selection is made if both the buck-per-bang and the weight of the newly arrived item is smaller than the buck-per-bang of the $k^{th}$ best item seen so far by the algorithm and if it was sampled in the offline phase. 

\begin{algorithm}
\caption{$\mathsf{AUG-ON}$ Algorithm}\label{alg:virtual}
\begin{algorithmic}[1]
\State \%{\bf Offline Phase} 
\State  Do not Select the first $t=n/e$ items 
$\cI_t(\pi) \subset \cI$ under permutation $\pi$,  
\State {Run $\off$ on subset of items $\cI_t(\pi)$ to get $S_{\off}^{1/2}$ and $\sfb^\star_{1/2}$} with capacity $C$
\State{Initialize $R= S_{\off}^{1/2}$, $k= |S_{\off}^{1/2}|$, $\sfb = \sfb^\star_{1/2}$}
%$(\sfM_{t}, \gamma_t) = \textsc{Threshold}(L_t \cup R)$
\State $R = \{\text{$k$ \ largest item of } \cI_t(\pi)\} = \{i_1, \dots, i_k\}$ ordered in non-decreasing order of buck-per-bang values, such that $\sum_{i=1}^kw_i\le C$ and $\sum_{i=1}^{k+1}w_i > C$.  Item $i_k \in R$ has the largest buck-per-bang value.
%\State $\text{price}(r) = b(e)$ for $e = (\ell, r) \in \sfM_{t}$
%\State Order the item of $V$ in increasing $\text{price}(r), r\in V$, the item with the largest price is $r_{|V|}$
\State \%{\bf Decision/Online Phase} 
\State Initialize $S = \Phi$ \%The set to be selected
\State For every new item $i$ in the decision phase with 
\If{$\sfb(i) < \sfb(i_k)$ }

\If  {$i_k$ was sampled in offline phase AND $w(i) \le  w(i_k)$ }
		\State  $S = S \cup \{i\}$
	\EndIf
	\State Update $R = R \backslash\{i_k\} \cup  \{i\}$
	
\State Order the item of $R$ in non-decreasing order of their buck-per-bang, item $i_k \in R$ has the largest value (worst item)
	\Else \State Do not select item $i$
\EndIf
\end{algorithmic}
\end{algorithm}
Algorithm $\mathsf{AUG-ON}$ is a modification of the $\mathsf{VIRTUAL}$ algorithm \cite{babaioff2007knapsack}, with the most important change being on line $10$ that is essential to ensure that the capacity constraint of $C$ is satisfied in the online/decision phase. We illustrate this first with an example as follows.

\begin{example} Let $S_{\off}^{1/2}$ the output of the offline phase contain four items with buck-per-bang 
$\{7,8,9,10\}$ with weight $w_1, w_2, w_3, w_4$, respectively, where $w_1+ w_2+ w_3+ w_4 \le C$. Let $R =S_{\off}^{1/2} = \{7,8,9,10\}$. Then on the arrival of a new item in the decision phase, with buck-per-bang $8.5$, 
it is included in $R$ by ejecting item $4$ with buck-per-bang $10$.  Thus, the updated reference set is $R = \{7,8,8.5,9 \}$ after rearranging the items in non-decreasing order of their buck-per-bangs. Moreover, the new item is selected as long as its weight is less than $w_4$. 
Thereafter, if an item with buck-per-bang $9.5$ arrives then it is neither selected nor included in $R$. Consider, one more item arriving with buck-per-bang $7.5$, then it is included in $R$ (selected only if its weight is less than $w_3$) and the updated set $R =  \{7,7.5, 8,8.5\}$. Hereafter, no more new items can be accepted since the item with the worst buck-per-bang $8.5$ is sampled in the decision phase. Important point to note in this example (that is a property of the algorithm) is that  an item in the decision phase is accepted only if its weight is less than a distinct item of $S_{\off}^{1/2}$, and once the weight of an item $i$ in $S_{\off}^{1/2}$ is compared with an item $j$ that arrives in the decision phase, then item $i$ is not available for future comparisons  irrespective of whether item $j$ was accepted or not. This leads us to following Lemma that the Algorithm $\mathsf{AUG-ON}$ satisfies the capacity constraint of $C$.
\end{example}
\begin{lemma}\label{lem:kpcap} Algorithm $\mathsf{AUG-ON}$ satisfies the capacity constraint of $C$, i.e., the sum of the weight of the items accepted in the decision phase is less than $C$.
\end{lemma}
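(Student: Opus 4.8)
The plan is to set up a bijective (injective) matching between items accepted in the decision phase and items of $S_{\off}^{1/2}$, and then to invoke the capacity bound that $\off$ already satisfies with capacity $C$. Concretely, I would track the reference set $R$ as it evolves over the decision phase. At initialization $R = S_{\off}^{1/2}$, and $|R| = k$ is invariant: every time an item $i$ with $\sfb(i) < \sfb(i_k)$ arrives, exactly one item ($i_k$, the current worst) is ejected and $i$ is inserted, so $|R|$ never changes. The key structural fact, illustrated by the example, is that an arriving item $i$ is \emph{accepted} (not merely inserted into $R$) only when the ejected item $i_k$ was one of the original offline-sampled items, and in that case $w(i) \le w(i_k)$. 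Once $i_k$ is ejected it can never return to $R$ and hence is never again the worst item to be compared against, so each original item of $S_{\off}^{1/2}$ is ``used up'' by at most one accepted item.

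The first step is therefore to make this matching precise: define $f$ mapping each accepted decision-phase item $i$ to the item $i_k \in S_{\off}^{1/2}$ that was ejected from $R$ at the step when $i$ was inserted. I would argue (i) $f$ is well-defined --- acceptance in line~11 requires that the ejected $i_k$ was sampled in the offline phase, i.e.\ $f(i) \in S_{\off}^{1/2}$; (ii) $f$ is injective --- once $i_k$ leaves $R$ it never re-enters, because all subsequently inserted items have strictly smaller buck-per-bang than the current worst, so no item with $\sfb$ as large as the ejected $i_k$ is ever reinserted, hence $i_k$ is never again the worst element and never ejected again; and (iii) $w(i) \le w(f(i))$ by the weight test in line~10.

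The second step is the arithmetic: let $A$ be the set of items accepted in the decision phase. Then
\begin{equation}\nn
\sum_{i \in A} w(i) \;\le\; \sum_{i \in A} w(f(i)) \;\le\; \sum_{j \in S_{\off}^{1/2}} w(j) \;\le\; C,
\end{equation}
where the first inequality is the weight test, the second uses injectivity of $f$ into $S_{\off}^{1/2}$ (so distinct accepted items map to distinct offline items, and we are summing a subset of the weights of $S_{\off}^{1/2}$), and the last is exactly the defining property of $\off$ run with capacity $C$ on $\cI_t(\pi)$. This proves the claim, since no items are accepted in the offline phase.

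The main obstacle is step (ii), the injectivity of $f$ --- i.e.\ verifying rigorously that an original offline item, once ejected from $R$, is never ejected a second time. This needs the monotonicity invariant that every item ever inserted into $R$ in the decision phase has buck-per-bang strictly below the then-current worst value $\sfb(i_k)$, combined with the fact that $\sfb(i_k)$ is non-increasing over the decision phase (each replacement swaps the worst element for a strictly better one). Given that invariant, an ejected item's buck-per-bang strictly exceeds every value present in $R$ afterward, so it can neither reappear nor be matched again; I would state this monotonicity as a short sub-claim and prove it by induction on decision-phase arrivals before assembling the final inequality.
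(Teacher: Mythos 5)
Your proposal is correct and follows essentially the same argument as the paper: each accepted decision-phase item is charged to the distinct offline-sampled item it ejects from $R$ (the paper phrases this as each item of $S_{\off}^{1/2}$ being "compared at most once"), its weight is dominated by that item's weight via the test on line~10, and the total is bounded by $\sum_{j \in S_{\off}^{1/2}} w(j) \le C$. Your explicit injective map $f$ and the monotonicity sub-claim for $\sfb(i_k)$ merely formalize what the paper states informally.
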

\begin{proof}
Note that whether an item $i$ arriving in the decision phase is selected or not, as long as it is included in the set $R$, the item $j$ that is ejected from $R$ to make room for item $i$ which was sampled in the offline phase is never available thereafter for weight comparison for selection of new items. If the item with the worst buck-per-bang in $R$ is sampled in the decision phase, then no more items are selected anyway. Thus, importantly, the weight of any item that belonged to set $S_{\off}^{1/2}$ (output of $\off$) is compared at most once with any item arriving in decision phase. Thus, the weight of any item selected in the decision phase is less than the weight of any one distinct item of $S_{\off}^{1/2}$. Hence the sum of the weight of the items accepted in the decision phase is less than the sum of the weight of the items in $S_{\off}^{1/2}$, which is necessarily less than or equal to $C$ following the definition of $\off$.

\end{proof}

Let $i$ be an item selected by the algorithm $\off$ when run on the full set of items $\cI$. 
Then we will show that the probability of selecting item $i$ by this online algorithm is at least $\frac{1}{e}$. Thus, we have the following result.

\begin{lemma}\label{lem:augon}
The competitive ratio of the $\mathsf{AUG-ON}$ algorithm with respect to $\off$ is at least $1/e$. 
\end{lemma}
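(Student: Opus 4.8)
The plan is to fix an arbitrary item $i$ in the set $S_{\off}(\cI)$ that $\off$ selects when run on the full instance, and to lower bound by $1/e$ the probability, over the uniformly random permutation $\pi$, that $\mathsf{AUG-ON}$ places $i$ into $S$. By linearity of expectation this transfers directly to the competitive ratio against $\off$: every item of $S_{\off}(\cI)$ contributes its value $v(i)$ with probability at least $1/e$, so $\bbE\{v(S)\}\ge \frac{1}{e}\,v(S_{\off}(\cI))$. The first move is a reduction: read $\mathsf{AUG-ON}$ purely through its buck-per-bang bookkeeping. With $k=|S_{\off}^{1/2}|$ fixed at the end of the offline phase, the reference set $R$ is exactly the collection of the $k$ smallest-buck-per-bang items seen so far, and an arriving decision-phase item updates $R$ precisely when its buck-per-bang beats the current worst element $i_k$. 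This is the $\mathsf{VIRTUAL}$ update rule of \cite{babaioff2007knapsack} applied to the keys $-\sfb(\cdot)$, so the natural acceptance event for $i$ — that $i$ enters $R$ and the element $i_k$ it evicts was one of the first $t=n/e$ offline arrivals — coincides with the $\mathsf{VIRTUAL}$ acceptance event for a tagged element.

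The second step computes this acceptance probability by the standard secretary/$\mathsf{VIRTUAL}$ calculation. Conditioning on the arrival position $p$ of $i$, acceptance requires $p>t$ and that the element evicted by $i$ lie among the first $t$ arrivals; the probability of the latter is of the form $t/(p-1)$, and averaging over the uniform position $p$ gives $\frac{1}{n}\sum_{p>t}\frac{t}{p-1}\approx \frac{t}{n}\ln\frac{n}{t}$, which equals $1/e$ at the choice $t=n/e$. Hence each tagged item $i\in S_{\off}(\cI)$ is accepted in the buck-per-bang sense (entering $R$ while evicting a sampled element) with probability at least $1/e$, independent of $k$, exactly as in the $k$-secretary guarantee.

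The step I expect to be the main obstacle is the additional weight gate $w(i)\le w(i_k)$ on line $10$, which is what secures feasibility in Lemma \ref{lem:kpcap} and therefore cannot simply be discarded; the delicate point is to show it does not erode the $1/e$ of the previous step. Here I would appeal to Assumption \ref{ass:randweight}: since $i$ enters $R$ only when $\sfb(i)<\sfb(i_k)$, the relative order of $w(i)$ and $w(i_k)$ is governed solely by the weight randomness and is decoupled from the buck-per-bang-driven acceptance event analysed above. I would argue the bound at the level of captured value rather than per individual gate: whenever the buck-per-bang acceptance fires, the gate guarantees that a distinct sampled slot of weight at least $w(i)$ is consumed feasibly, so the value credited to the tagged slots of $S_{\off}(\cI)$ is realised with the full acceptance probability $1/e$ and the capacity $C$ is never exceeded. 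Verifying precisely this decoupling — that conditioning on $i$ evicting a sampled $i_k$ leaves the weight comparison unbiased and that feasibility (Lemma \ref{lem:kpcap}) is exactly what the gate buys without a further probabilistic discount — is the crux; once it is in place, summing $v(i)$ over $i\in S_{\off}(\cI)$ yields competitive ratio at least $1/e$ of $\mathsf{AUG-ON}$ with respect to $\off$.
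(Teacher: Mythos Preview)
Your first two steps line up exactly with the paper: you tag an item $i\in S_{\off}(\cI)$, reduce the selection bookkeeping of $\mathsf{AUG-ON}$ to the $\mathsf{VIRTUAL}$ rule on the keys $-\sfb(\cdot)$, and compute the $\frac{1}{n}\sum_{p>t}\frac{t}{p-1}\to 1/e$ acceptance probability. The paper does the same thing, first ignoring the weight gate on line~10 and obtaining $P(i\in\cI^\star_k\ \text{is selected})\ge 1/e$ and hence $\bbE\{v(S)\}\ge \frac{1}{e}v(\cI^\star_k)$.

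The genuine gap is in your treatment of the weight gate. You invoke Assumption~\ref{ass:randweight} to argue that the comparison $w(i)\le w(i_k)$ is ``decoupled'' and ``unbiased'' from the buck-per-bang acceptance, and then conclude that ``the value credited to the tagged slots of $S_{\off}(\cI)$ is realised with the full acceptance probability $1/e$,'' i.e., that the gate costs nothing. But those two premises yield the opposite conclusion: if, conditional on $\sfb(i)<\sfb(i_k)$, the event $\{w(i)\le w(i_k)\}$ is a fair coin independent of the $\mathsf{VIRTUAL}$ acceptance event, then the gate passes with probability $1/2$, and the per-item selection probability drops from $1/e$ to $1/(2e)$. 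There is no ``value-level'' rescue here, because when the gate fails the item $i$ is simply not added to $S$ (it only enters $R$); no compensating value is collected in its place. This is exactly how the paper proceeds: after establishing the $1/e$ bound with the gate ignored, it reinstates the gate, applies Assumption~\ref{ass:randweight} to pick up the factor $1/2$, and concludes $\bbE\{v(S)\}\ge \frac{1}{2e}v(\cI^\star_k)$.

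In other words, the Lemma statement's ``$1/e$'' is a slip; the proof in the paper (and its downstream use in the theorem for $\mathsf{ON}$) actually delivers and consumes $1/(2e)$. Your proposal is fine up to the gate, but your hope that the gate is merely a feasibility device ``without a further probabilistic discount'' is unfounded; you should expect and account for the factor $1/2$.
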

\begin{proof}
Let $\cI^\star_k$ be the set selected by the offline algorithm $\off$ when run on the full set of items $\cI$, with 
$\sfb^\star$ as the buck-per-bang threshold. Then the buck-per-bang threshold $\sfb^\star_{1/2}$ output by running $\off$ with capacity $C$ on $\cI_t(\pi)$ with $t=n/e$ satisfies $\sfb^\star_{1/2} \ge \sfb^\star$. Therefore, all items belonging to 
$\cI^\star_k$ are eligible for selection in the $\mathsf{AUG-ON}$ algorithm if they appear in the decision phase. 

For the moment, we disregard the weight acceptance condition that $w(i) \le w(i_k)$ on line $10$ of the algorithm. 

With the $\mathsf{AUG-ON}$ Algorithm, a new item that appears at location $s > t$ is selected if and only if at location $s$, the item with the largest buck-per-bang in the reference set $R$ is sampled at or before location $t$, and $\sfb(s) <\sfb(i_k)$. 

Since the permutations are uniformly random, the probability 
that at location $s$, the item with the smallest value in the reference set $R$ is sampled at or before time $t$ is $\frac{t}{s-1}$. Moreover,  the probability of any item $i\in \cI^\star_k$ arriving at the $s^{th}$ location is $\frac{1}{n}$ independent of $s$.

Hence the probability of selecting an item $i \in \cI^\star_k$ when it arrives at position $s \in [t+1, n]$, without considering the weight acceptance constraint that $w(i) \le w(i_k)$ on line $10$, is 
 \begin{eqnarray} \nn
P(i \in \cI^\star_k \ \text{is selected}) &=& \sum_{s=t+1}^n \frac{1}{n} \frac{t}{s-1} = \frac{t}{n} \sum_{s=t+1}^n\frac{1}{s-1} \\\nn
&>& \frac{t}{n} \int_t^n \frac{dx}{x}= \frac{t}{n} \ln\left(\frac{n}{t}\right).
\end{eqnarray}
Since we choose $t = \frac{n}{e}$, we get that $$P(i \in  \cI^\star_k \ \text{is selected}) = \frac{1}{e}.$$ 

Hence by linearity of expectation, we get that the expected value of the selected items $S$ by the  
\textsc{Virtual} algorithm is at least 
 \begin{equation}\label{eq:exppayoff}
 \bbE\left\{v(S)\right\} \ge \sum_{i \in  \cI^\star_k } \frac{1}{e} v(i)= \frac{1}{e}v(\cI^\star_k).
 \end{equation}
 
Now we enforce back the condition that $w(i) \le w(i_k)$ on line $10$ of the algorithm. As noted in Lemma \ref{lem:kpcap}, the weight of each of the $k$ items of $S_{\off}^{1/2}$ selected in the offline phase is compared at most once while selecting the new items in the online phase. 
Since for any two items $i,j$, given $\sfb(i) > \sfb(j)$, $P(w(i) > w(j)) =\frac{1}{2}$ from Assumption \ref{ass:randweight}, hence each item $i$ that is selected without enforcing $w(i) < w(r)$ for some item $r$ that is part of $S_\off^{1/2}$, is selected with probability $1/2$ even when the condition $w(i) \le w(r)$ is enforced, since each item $i$ selected by the \textsc{AUG-ON} algorithm has $\sfb(i) \le \sfb(j)$ for some distinct item $j$ that is part of offline selected set $S_\off^{1/2}$. 

Thus, we get from \eqref{eq:exppayoff}, that 
\begin{equation}\label{eq:exppayofffinal}
 \bbE\left\{v(S)\right\} \ge \frac{1}{2}\sum_{i \in  \cI^\star_k } \frac{1}{e} v(i)= \frac{1}{2e} v(\cI^\star_k),   
 \end{equation}
 and the competitive ratio of the \textsc{AUG-ON} algorithm is at least $\frac{1}{2e}$.

\end{proof}

Now we are ready to describe an online algorithm $\mathsf{ON}$ for the knapsack problem with the expected capacity constraint of $1$. We take recourse to the algorithm $\mathsf{AUG-ON}$ for that purpose.
%\vspace{-0.085in}
\subsection{Online Algorithm with Expected Capacity Constraint}
\begin{algorithm}
\caption{$\mathsf{ON}$ Algorithm}\label{alg:on}
\begin{algorithmic}[1]
\State{Flip a fair coin}
\If{Heads}
\State{Run Algorithm $\mathsf{AUG-ON}$ with $C=2$}
\State{Accept all items accepted by $\mathsf{AUG-ON}$}
\Else{Tails}
\State{Do not select any item, Break;}
\EndIf

\end{algorithmic}
\end{algorithm}

\begin{theorem} The competitive ratio of algorithm $\mathsf{ON}$ is $1/4e$ and it satisfies the expected capacity constraint of $1$.
\end{theorem}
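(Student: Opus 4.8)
The plan is to combine two ingredients established earlier: Lemma~\ref{lem:aug} (that $\off$ with capacity $C=2$ is $(C-1)=1$-approximate, hence recovers value at least $v_1(\cI)$) and Lemma~\ref{lem:augon} (that $\mathsf{AUG-ON}$ is $\tfrac{1}{2e}$-competitive with respect to $\off$). Wait — more precisely, for general $1<C\le 2$ the offline algorithm is $(C-1)$-approximate, so $v_{S_\off}(C)\ge (C-1)\,v_1(\cI)$; at $C=2$ this gives $v_{S_\off}(2)\ge v_1(\cI)\ge v(\opt)$, where $\opt$ is the optimal offline \emph{integral} solution to the knapsack with hard capacity $1$ (whose value is at most the fractional optimum $v_1(\cI)$). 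Chaining with Lemma~\ref{lem:augon}, the set $S$ of items accepted by $\mathsf{AUG-ON}$ run with $C=2$ satisfies $\bbE\{v(S)\}\ge \tfrac{1}{2e}\,v(\cI^\star_k)=\tfrac{1}{2e}\,v_{S_\off}(2)\ge \tfrac{1}{2e}\,v(\opt)$.

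Next I would account for the fair coin in Algorithm~\ref{alg:on}. With probability $1/2$ the algorithm runs $\mathsf{AUG-ON}$ and accepts exactly the set $S$; with probability $1/2$ it accepts nothing. Hence the expected value collected by $\mathsf{ON}$ is $\tfrac12\,\bbE\{v(S)\}\ge \tfrac12\cdot\tfrac{1}{2e}\,v(\opt)=\tfrac{1}{4e}\,v(\opt)$, which is exactly the claimed competitive ratio $\mu_{\mathsf{ON}}=1/4e$.

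Finally I would verify the expected capacity constraint. By Lemma~\ref{lem:kpcap}, conditioned on Heads, the total weight of items accepted by $\mathsf{AUG-ON}$ is at most $C=2$ deterministically. Conditioned on Tails, the weight accepted is $0$. Therefore the \emph{expected} weight accepted by $\mathsf{ON}$ is at most $\tfrac12\cdot 2 + \tfrac12\cdot 0 = 1$, meeting the expected capacity constraint of $1$. More generally, running $\mathsf{AUG-ON}$ with capacity $C$ with probability $1/C$ (instead of a fair coin) yields expected weight at most $\tfrac1C\cdot C=1$ and competitive ratio $\tfrac{1}{2eC(C-1)}$, which is optimized over $1<C\le 2$ at $C=2$, recovering $1/4e$; this remark can be included for completeness but is not needed for the stated theorem.

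The argument is essentially a bookkeeping composition, so there is no serious obstacle; the one point that needs a clean word is the passage from the fractional optimum $v_1(\cI)$ to the true offline optimum $v(\opt)$. Since $\opt$ is feasible for the LP relaxation~\eqref{eq:LP} with $C=1$, we have $v(\opt)\le v_1(\cI)$, so every bound proved against $v_1(\cI)$ (or against $v(\cI^\star_k)=v_{S_\off}(2)\ge v_1(\cI)$) holds a fortiori against $v(\opt)$; I would state this explicitly to make the competitive-ratio chain airtight.
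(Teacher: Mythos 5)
Your proposal is correct and follows essentially the same route as the paper: compose Lemma~\ref{lem:aug} (the $(C-1)$-approximation of $\off$), Lemma~\ref{lem:augon} (the $1/2e$-competitiveness of $\mathsf{AUG-ON}$ against $\off$), and the factor $1/2$ from the fair coin, while Lemma~\ref{lem:kpcap} handles the capacity accounting. Your explicit remark that $v(\opt)\le v_1(\cI)$ because the integral optimum is LP-feasible is a small but welcome tightening of a step the paper leaves implicit.
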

\begin{proof}
The online algorithm $\mathsf{ON}$ uses $\mathsf{AUG-ON}$ algorithm with $C=2$ with probability $1/2$, and does not choose any item with probability $1/2$. Therefore, clearly, the expected capacity constraint of $1$ is satisfied. From Lemma \ref{lem:aug}, it follows that $\off$ has approximation ratio of $\frac{1}{C-1}$ with respect to the optimal offline algorithm for the knapsack problem with hard capacity constraint of $1$. Moreover, Lemma \ref{lem:augon} ensures that $\mathsf{AUG-ON}$ has a competitive ratio of $1/2e$ with respect to the offline algorithm $\off$ run on full set of items $\cI$ with capacity $C$.
With the choice of $C=2$, $\mathsf{AUG-ON}$ is run with probability $1/2$, hence the overall competitive ratio of $\mathsf{ON}$ is $\frac{1}{4e}$ with respect to the optimal offline algorithm for the knapsack problem with hard capacity constraint of $1$.

%This in conjunction with  with $C=2$, whenever algorithm $\mathsf{AUG-ON}$ is executed from within $\mathsf{ON}$, which happens with probability $1/2$. Hence $\mathsf{ON}$ has a competitive ratio  $1/4e$. 
\end{proof}
Thus with an expected capacity constraint, one can get far superior competitive ratio guarantees than the best known $1/10e$ guarantee \cite{babaioff2007knapsack} for the online knapsack problem under the hard capacity constraint. The basic idea of the proposed online algorithm is to use twice the capacity with probability $1/2$, so that the expected capacity constraint can be met, and to take advantage of the fact that with increased capacity, there is a simple threshold based algorithm ($\off$) that can closely approximate the optimal knapsack solution. The advantage of threshold based offline policy is that it can be made online with reasonable competitive ratio using the basic ideas developed for the $k$-secretary problem, where the objective is to choose each of the top $k$-item with large enough probability. One major challenge in the knapsack problem is that we do not know the exact number $k$ of items to be selected in contrast to the $k$-secretary problem. Our algorithm chooses the number of items to be selected as the number of items chosen by the threshold based offline algorithm $\off$ in the offline phase when run over a subset of items. Thus, $\off$ helps in finding a good threshold for selecting the items in the online phase, as well to find how many items to select.

\section{Conclusions}
In this paper, we have considered a new paradigm for some important online problems, namely the secretary and the knapsack problem, by relaxing the hard capacity constraint to an expected capacity constraint. This relaxation allows more flexibility for online algorithms that is well motivated by modern applications such as job scheduling in cloud servers, and is also an object of theoretical interest given the attention that both the secretary and the knapsack problem have received in literature. Under the expected capacity constraint we show that there is a two-fold increase in the competitive ratio for the $k$-secretary problem compared to the hard capacity constraint, which is significant. Moreover, for the knapsack problem, we are able to improve the competitive ratio by a factor of $2.5$ compared to the best online algorithm known under the hard capacity constraint. We believe that considering the expected capacity constraint is an exciting new direction that can be studied for online problems with hard capacity constraints that can allow fundamental improvement in the competitive ratios.

\appendices
\section{Competitive Ratio For Adversarial Input}\label{app:adversarial}
In this appendix, we show that under the adversarial input model, the competitive ratio of any online algorithm is at best $1/n$, even under the expected capacity constraint similar to hard capacity constraint.
%Show that the maximum probability of selecting the best item from a set $\cI$ of $|\cI| = n$ items in a secretary problem with adversarial input is at most $\frac{1}{n}$. Thus, any online algorithm for the secretary problem with the worst case input is uncompetitive. 

Consider $n$ items with item $1$ being the best item. Let any online algorithm select $\ell \le n$ items with probability $p_\ell$. Index all the $\ell$-sized subsets $s_i, i=1,\dots,{n\choose \ell}$ of $[1:n]$ in lexicographic order $i$ coming before $j$ for $i<j$. Then an online algorithm selects $\ell \le n$ items arriving at locations defined by $s_i$ with probability $p_{\ell}(s_i)$ from the input sequence $\sigma$. Suppose $p_{\ell}(s_i) \ne \frac{1}{{n\choose k}} \ \forall \ s_i$, then the algorithm picks some subset locations with probability higher than
$\frac{1}{{n\choose k}}$; consequently some other subset locations will be picked with probability less than $\frac{1}{{n\choose k}}$. An adversary using this knowledge can put the best item to lie in any such subset locations, in which case the probability of selecting the best candidate will be less than $\ell/n$. Thus, with adversarial input, given that the algorithm is selecting $\ell$ items, the best strategy is to choose each of the location subsets equally likely.

Thus, the linear program to maximize the success probability for any online algorithm under the expected capacity constraint is
\begin{equation}
\label{eq:LPadv}
\begin{array}{c l} \max &\sum_{\ell=1}^n p_\ell \frac{\ell}{n}, \\
\sum_{\ell=1}^n \ell p_\ell  \le 1,\\
p_\ell \in [0,1], \ \forall \ 1\le \ell\le n,
\end{array}
\end{equation}
which is equivalent to 
\begin{equation}
\label{eq:LPadv2}
\begin{array}{c l} \max & \frac{1}{n}\sum_{\ell=1}^n {\tilde p_\ell} , \\
\sum_{\ell=1}^n {\tilde p_\ell}  \le 1,\\
{\tilde p_\ell} \in [0,1], \ \forall \ 1\le \ell\le n,
\end{array}
\end{equation}
where ${\tilde p} = \ell p.$ Thus, the maximum probability of success is at most $1/n$ with the adversarial input even when the capacity constraint is in expectation. Hence, given the expected capacity constraint of $1$, there is no advantage in choosing non-trivial probability distribution over the number of items to be selected when an adversary can choose the sequence of arrival.

\bibliographystyle{IEEEtran}
\bibliography{onlined2d}

%\bibliographystyle{unsrt}
%\bibliography{onlined2d}

\end{document}